\newcommand{\R}{\mathbb{R}}
\newcommand{\C}{\mathbb{C}}
\newcommand{\Z}{\mathcal{Z}}
\newcommand{\Rb}{\mathcal{R}}
\newcommand{\Sb}{\mathcal{S}}
\newcommand{\Tb}{\mathcal{T}}
\newcommand{\B}{\mathcal{B}}
\newcommand{\V}{\mathcal{V}}
\newcommand{\setU}{\mathcal{U}}
\newcommand{\img}{\mathrm{Im}\,}
\newcommand{\srad}{\rho}
\newcommand{\id}{\mathrm{I}}
\newcommand{\s}{\textsc{n}}
\newcommand{\CL}{\textsc{cl}}
\newcommand{\dfn}{:=}
\newcommand{\dist}{\mathrm{d}}
\newcommand{\md}{\underline{\dist}}
\newcommand{\vv}{\hat v^\natural}
\newcommand{\E}{E}
\renewcommand{\H}{H}
\newcommand{\cea}{\mathrm{CEA}}
\DeclareMathOperator{\argmin}{\mathrm{argmin}}
\newcommand{\Ind}{\underline{\s}}
\newtheorem{problem}{Problem}
\newtheorem{lemma}{Lemma}
\newtheorem{thm}{Theorem}
\newtheorem{defn}{Definition}
\newtheorem{rem}{Remark}
\newtheorem{claim}{Claim}
\title{\LARGE \bf Feedback stabilisation of switched systems via
  iterative approximate eigenvector assignment\thanks{Version from
    \today. Extended version of that submitted to CDC 2010.}}
\author{Hernan Haimovich%
  \thanks{H. Haimovich is with CONICET and the Laboratorio de Sistemas
    Din\'amicos y Procesamiento de Informaci\'on, Departamento de
    Control, Escuela de Ingenier\'{\i}a Electr\'onica, Facultad de
    Ciencias Exactas, Ingenier\'{\i}a y Agrimensura, Universidad
    Nacional de Rosario, Riobamba 245bis, 2000 Rosario, Argentina,
    {\tt\small haimo@fceia.unr.edu.ar}}\, %
  and %
  Julio H. Braslavsky
  \thanks{J.H. Braslavsky is with the ARC Centre for Complex Dynamic
    Systems and Control, The University of Newcastle, Callaghan NSW
    2308, Australia {\tt\small jhb@ieee.org}}%
}
\date{}
\begin{document}
\addtolength{\topmargin}{-0.5em}

\maketitle
\thispagestyle{empty}

\begin{abstract}
  This paper presents and implements an iterative feedback design
  algorithm for stabilisation of discrete-time switched systems under
  arbitrary switching regimes. The algorithm seeks state feedback
  gains so that the closed-loop switching system admits a common
  quadratic Lyapunov function (CQLF) and hence is uniformly globally
  exponentially stable. Although the feedback design problem
  considered can be solved directly via linear matrix inequalities
  (LMIs), direct application of LMIs for feedback design does not
  provide information on closed-loop system structure. In contrast,
  the feedback matrices computed by the proposed algorithm assign
  closed-loop structure approximating that required to satisfy
  Lie-algebraic conditions that guarantee existence of a CQLF. The
  main contribution of the paper is to provide, for single-input
  systems, a numerical implementation of the algorithm based on
  iterative approximate common eigenvector assignment, and to
  establish cases where such algorithm is guaranteed to succeed. We
  include pseudocode and a few numerical examples to illustrate
  advantages and limitations of the proposed technique.
\end{abstract}

\section{Introduction}

We consider the discrete-time switching system (DTSS)
\begin{eqnarray}
  \label{eq:dtss}
  x_{k+1}&=& A_{i(k)}x_k + B_{i(k)}u_k,
\end{eqnarray}
with $x_k\in\R^n$, $u_k\in\R^m$, defined by a switching function
\begin{equation*}
  i(k) \in \Ind \dfn \{1,2,\ldots,\s\},
  \quad \text{for all }k,
\end{equation*}
and a set of controllable subsystem pairs $\{(A_i,B_i):i\in\Ind\}$,
where the input matrices $B_i$, $i\in\Ind$, have full column rank.  

We address feedback control design of the form
\begin{equation}
  \label{eq:8}
  u_k = K_{i(k)} x_k,
\end{equation}
(which assumes that at every time instant $k$ the ``active'' subsystem
given by $i(k)$ is known) so that the resulting closed-loop system
\begin{align}
  \label{eq:26}
  x_{k+1} &=  A_{i(k)}^\CL x_k,\quad\text{where}\\
  \label{eq:49}
  A_{i}^\CL &= A_{i} + B_{i} K_{i},\quad\text{for }i\in\Ind,
\end{align}
be exponentially stable under arbitrary switching.  

It is well-known that ensuring that the closed-loop matrices $A_i^\CL$
are stable for each $i\in\Ind$ is necessary but not sufficient to
ensure the stability of the DTSS (\ref{eq:26})--(\ref{eq:49}) under
arbitrary switching \cite{liberzon99:_basic}.  A necessary and
sufficient condition for uniform exponential stability under arbitrary
switching is the existence of a common Lyapunov function for each of
the component subsystems in (\ref{eq:26})--(\ref{eq:49})
\cite{molchanov89:_criter}.  Such Lyapunov functions, however, will in
general have complex level sets, which makes their numerical search
difficult \cite{shorten07:_stabil}.

The search for common \emph{quadratic} Lyapunov functions (CQLF),
although restrictive, is appealing, since these functions play an
important role in the stabilisation of linear time-invariant systems
such as the component subsystems in (\ref{eq:dtss}).  The design of
feedback matrices $K_i$ in (\ref{eq:8}) so that the DTSS
(\ref{eq:26})--(\ref{eq:49}) admits a CQLF may be posed as follows.
\begin{problem}
  \label{prb:main}
  Given the matrices $A_i\in\R^{n\times n}$ and $B_i\in\R^{n\times m}$
  for $i\in\Ind$, design feedback matrices $K_i\in\R^{m\times n}$ such
  that the DTSS closed-loop system (\ref{eq:26})--(\ref{eq:49}) admits
  a CQLF.
\end{problem}

Quadratic Lyapunov functions are amenable to efficient numerical
computation using linear matrix inequalities (LMIs).  For example,
Problem~\ref{prb:main} can be solved by finding $X=X^T > 0$ and $N_i$
to satisfy the LMIs
\begin{equation}
  \label{eq:13}
  \begin{bmatrix}
    X   &(A_iX+B_iN_i)^T \\
    A_iX + B_i N_i & X
  \end{bmatrix} > 0,\quad i\in\Ind,
\end{equation}
where the required feedback matrices are given by $K_i = N_i X^{-1}$
and the CQLF is $V(x)=x^T X^{-1} x$ (see for example
\cite{daafouz02:_stabil}, \cite{sala05:_comput}).  An advantage of
this approach is that the feasibility of the LMIs (\ref{eq:13}) is
necessary and sufficient for the DTSS considered to admit a CQLF.
However, blind application of such control design strategy gives no
insight on the structure of the closed-loop DTSS. Thus, as pointed out
in \cite{wulff09:_contr_dsgn}, these LMI methods ``lack transparency
and interpretability that was a feature of classical techniques'' and
hence ``a pressing need remains for analytic tools to support the
design of stable switched systems'' (also see
\cite{shorten07:_stabil}, \cite{lin09:_stabil} for similar comments).

As an alternative to the LMI approaches, the authors in
\cite{wulff09:_contr_dsgn} propose a pole-placement technique for
single-input single-output continuous-time switching systems.  The
strategy in \cite{wulff09:_contr_dsgn} seeks to guarantee closed-loop
uniform global exponential stability under arbitrary switching by
designing controllers that achieve a closed-loop common eigenvector
structure.  By constraining such eigenstructure and the class of
controllers allowed, the strategy in \cite{wulff09:_contr_dsgn}
simplifies the design process, providing analytically transparent
solutions in a restricted but practically important class of systems.

The present paper presents another strategy that seeks to ``activate''
analytic tools into a feedback design methodology (much in the spirit
of \cite{kokarc_aut01}) to solve Problem~\ref{prb:main}.  Our strategy
follows from the previous paper \cite{HBF2009:_CDC}, which introduced
an iterative algorithm to seek feedback gains that make the set of
closed-loop subsystems (\ref{eq:26})--(\ref{eq:49}) satisfy Lie
algebraic conditions that guarantee the existence of a CQLF
\cite{liberzon03:_switc}.  While such Lie-algebraic conditions are
restrictive, since they are not necessary for the existence of a CQLF,
we believe they offer an insightful way to understand and exploit
fundamental system structure in feedback design for DTSS.

The theoretical results in \cite{HBF2009:_CDC} brought forward the
following important consequences: (i) if the proposed Lie-algebraic
feedback design problem is feasible, its solution can be found in an
iterative manner (similarly to the way solvability of Lie-algebras can
be checked for autonomous switched systems) by seeking feedback gains
that assign a single common stable eigenvector at each iteration step;
and (ii) in seeking such common eigenvector, if at any iteration step
more than one vector can be assigned by feedback, it is irrelevant
which one is chosen by the procedure.  These observations provide
motivation to the development of numerical implementations, which were
not discussed in \cite{HBF2009:_CDC}.

The present paper focuses on the numerical implementation aspects of
the iterative algorithm introduced in \cite{HBF2009:_CDC}.  A key
question in the proposed approach is the lack of robustness of the
Lie-algebraic conditions that are sought to satisfy by feedback
design.  Indeed, it is well-known that these conditions are destroyed
by arbitrarily small perturbations to the individual matrices
\cite[\S2.2.4]{liberzon03:_switc}, and thus, there are a priori no
guarantees that the algorithm in \cite{HBF2009:_CDC} can find any
solution at all in a (necessarily approximate) numerical
implementation.

However, suppose that for a given set of systems to be stabilised
there exists, in a neighbourhood of the original, a feasible (a priori
unknown) set of systems, which is stabilisable and such that the
resulting closed-loop systems satisfy the Lie-algebraic conditions
that guarantee the existence of a CQLF.  Then, by continuity of such
CQLF, and if the neighbourhood is sufficiently small, there will also
exist a CQLF for the original set of systems, despite the fact that it
may be impossible to make them satisfy the Lie-algebraic conditions.

The main contribution of the present paper is to, based on the above
argument, derive and mathematically justify a specific numerical
implementation, for single-input systems, of the design algorithm
proposed in \cite{HBF2009:_CDC} that will succeed not only in cases
where the Lie-algebraic conditions are satisfied but also in
approximate cases. The proposed numerical implementation is based on
the solution of an optimisation problem that seeks feedback matrices
that will achieve closed-loop systems with an \emph{approximate}
common eigenvector. The existence of a CQLF for the corresponding
closed-loop DTSS may be readily checked a posteriori with a set of
\emph{informed} LMIs. This step is necessary, since it is in general a
priori unknown if the ``exact'' problem is feasible in some
neighbourhood of the original system data. The algorithm has been
implemented in \textsc{Matlab}$^{\tiny\textregistered}$, and numerical
examples to illustrate its application and discuss its advantages and
limitations are provided.

The rest of the paper is organised as follows.  The proposed algorithm
and its core, a procedure for approximate common eigenvector
assignment (Procedure~CEA), are presented and explained in
Section~\ref{sec:feedb-contr-design}.  The main theoretical results,
justifying the numerical implementation of the proposed algorithm for
single-input systems, appear in Section~\ref{sec:enabling-results}.
Section~\ref{sec:examples} presents details about the
\textsc{Matlab}$^{\tiny\textregistered}$ implementation of the
algorithm and some illustrative numerical examples, and
Section~\ref{sec:conclusions} gives the paper conclusions.  The main
proofs (Theorems~\ref{thm:num1} and~\ref{thm:KiCQLF}) are given in the
Appendix.

\paragraph*{Notation}

$\R$ and $\C$ denote the real and complex numbers. $\|\cdot\|$ denotes
Euclidean norm or corresponding induced matrix norm. $M^*$ denotes the
conjugate transpose of $M$. $\srad(\cdot)$ denotes spectral radius and
$(M)_{:,k}$ is the $k$-th column of $M$. If $M\in\C^{n\times m}$,
$\img M$ denotes $\{x\in\C^n : x = M y, y \in\C^m\}$. The Euclidean
distance between a vector $v\in\C^n$ and a set $\V\subset\C^n$ is
denoted $\dist(v,\V)$. $j$ denotes $\sqrt{-1}$.


\tableofcontents

\section{Feedback Control Design}
\label{sec:feedb-contr-design}

A sufficient condition for the closed-loop DTSS
(\ref{eq:26})--(\ref{eq:49}) to admit a CQLF is given by the following
result, which is a minor modification of \cite[Theorem
6.18]{theys_phd05}.
\begin{lemma}
  \label{lem:stab}
  If $\srad(A_i^\CL)<1$ for $i\in\Ind$, and the Lie algebra
  generated by $\{A_i^\CL: i\in\Ind\}$ is solvable, then
  (\ref{eq:26})--(\ref{eq:49}) admits a CQLF.
\end{lemma}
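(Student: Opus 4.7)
The plan is to combine Lie's classical simultaneous triangularisation theorem for solvable Lie algebras with a diagonal rescaling argument to produce an explicit CQLF. Since the Lie algebra generated by $\{A_i^\CL:i\in\Ind\}$ is solvable, Lie's theorem (applied over $\C$) yields a nonsingular $T\in\C^{n\times n}$ such that $U_i \dfn T^{-1} A_i^\CL T$ is upper triangular for every $i\in\Ind$. The diagonal entries of each $U_i$ are the eigenvalues of $A_i^\CL$, which by hypothesis all lie strictly inside the open unit disk.

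Next, I would invoke the standard observation that, for $D_\epsilon \dfn \diag(1,\epsilon,\epsilon^2,\ldots,\epsilon^{n-1})$ with $\epsilon>0$, the $(p,q)$ entry of $D_\epsilon^{-1} U_i D_\epsilon$ equals $\epsilon^{q-p}(U_i)_{pq}$ for $q\ge p$, so this conjugated matrix tends entrywise to the diagonal matrix of eigenvalues of $A_i^\CL$ as $\epsilon\to 0^+$. Since $\Ind$ is finite and $\max_{i\in\Ind}\srad(A_i^\CL)<1$, a single small enough $\epsilon>0$ can be chosen so that $\|D_\epsilon^{-1} U_i D_\epsilon\|<1$ holds simultaneously for every $i\in\Ind$. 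Setting $S\dfn T D_\epsilon$ and $P\dfn S^{-*} S^{-1}$, the candidate Lyapunov function $V(x) = x^* P x = \|S^{-1}x\|^2$ is Hermitian positive definite, and
\begin{equation*}
  A_i^{\CL *} P A_i^\CL - P
  = S^{-*}\!\bigl( (D_\epsilon^{-1}U_i D_\epsilon)^*(D_\epsilon^{-1}U_i D_\epsilon) - \id\bigr)S^{-1} < 0,
\end{equation*}
so $V$ is a CQLF over $\C$.

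Finally, to obtain a \emph{real} CQLF I would pass from $P$ to $P_\Re \dfn \re(P) = (P+\bar P)/2$. Complex-conjugating the displayed inequality and using that each $A_i^\CL$ is real produces the twin Hermitian inequality with $\bar P$ in place of $P$; averaging then yields $A_i^{\CL T} P_\Re A_i^\CL - P_\Re < 0$, and this matrix is real and symmetric since $P_\Re$ and $A_i^\CL$ are real. Positive-definiteness of $P_\Re$ follows from $x^T P_\Re x = x^T P x > 0$ for every nonzero $x\in\R^n$ (for real $x$ and Hermitian $P$, the imaginary part of $x^T P x$ vanishes because it is given by a real antisymmetric quadratic form). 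Hence $V_\Re(x) = x^T P_\Re x$ is a real CQLF for the closed-loop DTSS.

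I expect the main obstacle to be the need to work over $\C$ throughout the construction: Lie's theorem is intrinsically complex, so the triangularising transformation $T$ generally has no real counterpart. The conjugation-and-average step is the standard device for descending to a real CQLF, but it must be executed carefully so that both positive-definiteness and the strict Lyapunov inequality survive. Beyond that, the finiteness of $\Ind$ makes the uniform choice of $\epsilon$ immediate, and the remaining computations are routine.
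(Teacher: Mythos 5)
Your proof is correct. The paper does not prove this lemma itself — it cites it as a minor modification of Theys's Theorem~6.18 — and your argument (Lie's theorem for simultaneous upper triangularisation over $\C$, the $D_\epsilon$ diagonal rescaling to push the strictly-upper-triangular part below the spectral gap, and the conjugate-and-average step to descend from the complex Hermitian $P$ to a real symmetric CQLF) is precisely the standard proof underlying that cited result, with all the delicate points (uniform choice of $\epsilon$ over the finite index set, preservation of strictness and positive definiteness under realification) handled correctly.
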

In matrix terms, the fact that the Lie algebra generated by
$\{A_i^\CL: i\in\Ind\}$ be solvable is equivalent to the
existence of a single invertible matrix $T\in\C^{n\times n}$ such that
$T^{-1} A_i^\CL T$ is upper triangular for $i\in\Ind$ (even if
$A_i^\CL$ have real entries, those of $T$ may be complex
\cite{erdwil_book06}). 

\subsection{The algorithm}
\label{sec:algorithm}

In \cite{HBF2009:_CDC}, we established that given $A_i$ and $B_i$,
there exist feedback matrices $K_i$ that cause the Lie algebra
generated by $A_i^\CL$ to be solvable if and only if such feedback
matrices can be computed by an algorithm that performs iterative
common eigenvector assignment by feedback. A matrix version of such
algorithm is given in pseudocode as Algorithm~\ref{alg:main}.
\begin{algorithm}[!ht]
  \label{alg:main}
  \SetTitleSty{texrm}{\normalsize}
  \SetKwComment{tcc}{\% }{}
  \KwData{$A_i \in \R^{n\times n}$, $B_i \in \R^{n\times 1}$ for
    $i\in\Ind$, $\epsilon_c > 0$, $\epsilon_d > 0$}
  \KwOut{$U$, $K_i$ for $i\in\Ind$}
  \SetKwBlock{Init}{begin}{end}
  \Init(Initialisation){
    $A_i^1 \dfn A_i$, $B_i^1 \dfn B_i$, $K_i^0 \dfn 0$, $U_1 \dfn
    \id$ \;
    $U \dfn [\,]$ (empty), $\ell \leftarrow 0$ \;
  }
  \Repeat{$\ell = n$}%
  {\vspace{-1em}
    \begin{align}
      \notag
      \ell &\leftarrow \ell+1,\quad n_r \leftarrow n-\ell+1 \;\\
      \label{eq:15}
      [v_1^\ell,\{F_i^\ell\}_{i=1}^{\s}] &\leftarrow  \cea(\{A_i^\ell,
      B_i^\ell\}_{i=1}^{\s}, \epsilon_c, \epsilon_d)\;
    \end{align}
    Define
    \begin{align}
      \label{eq:33}
      A_i^{\ell,\CL} &\dfn A_i^\ell + B_i^\ell F_i^\ell,\\
      \notag
      (U)_{:,\ell} &\leftarrow \biggl(\prod_{r=\ell}^{r=1} U_r\biggr)
      v_1^\ell = U_{1} U_{2} \cdots U_{\ell} v_1^\ell,\\
      \label{eq:9}
      K_i^\ell &\leftarrow K_i^{\ell-1} + F_i^\ell 
      \biggl(\prod_{r=1}^{\ell} U_r^* \biggr)
    \end{align}
    \If{$\ell<n$}{%
      Construct unitary matrix with first column $v_1^\ell$:
      \begin{equation}
        \label{eq:62}
        \big[v_1^\ell | v_2^\ell | \cdots 
        | v_{n_r}^\ell\big] \in \C^{n_r \times n_r}.
      \end{equation}
      Assign
      \begin{align}
        \label{eq:99}
        U_{\ell+1} &\leftarrow [v_2^{\ell}|\cdots |v_{n_r}^\ell], \\
        \label{eq:100}
        A_i^{\ell+1} &\leftarrow U_{\ell+1}^{*} A_i^{\ell,\CL} U_{\ell+1},\\
        \label{eq:6} 
        B_i^{\ell+1} &\leftarrow U_{\ell+1}^{*} B_i^\ell,
      \end{align}
    }
  }
  $K_i \leftarrow K_i^n$ \;
  \caption{Approximate triangularisation by feedback}
\end{algorithm}

At every iteration, Algorithm~\ref{alg:main} executes Procedure~$\cea$
[see (\ref{eq:15})]. This procedure attempts to find feedback matrices
$F_i^\ell$ and a vector $v_1^\ell$ so that $(A_i^\ell + B_i^\ell
F_i^\ell)v_1^\ell = \lambda_i^\ell v_1^\ell$ and $|\lambda_i^\ell|<1$
for $i\in\Ind$, i.e. so that $v_1^\ell$ becomes a common eigenvector
of a set of closed-loop matrices, with corresponding stable
eigenvalues. The parameters $\epsilon_c$ and $\epsilon_d$ given as
arguments to  Procedure~$\cea$ are required for numerical reasons,
and will be explained in Section~\ref{sec:comm-eigenv-assignm}.

\begin{rem}
  \label{rem:ut}
  It is straightforward to check that if Algorithm~\ref{alg:main}
  terminates successfully and at every iteration ($\ell=1,\ldots,n$)
  Procedure~$\cea$ is able to find the vector $v_1^\ell$ and feedback
  matrices $F_i^\ell$ such that $(A_i^\ell + B_i^\ell
  F_i^\ell)v_1^\ell = \lambda_i^\ell v_1^\ell$ for $i\in\Ind$,
  then the matrices $A_i^\CL$ given by (\ref{eq:49}) with $K_i$ as
  computed by the algorithm are such that $U^* A_i^\CL U$ are upper
  triangular and $\lambda_i^\ell$ is the $\ell$-th main-diagonal entry
  of $U^* A_i^\CL U$.
\end{rem}

  Note that a slight modification of Algorithm~\ref{alg:main} is
  necessary to ensure that real feedback matrices $K_i$ are
  computed. We do not explain such modification here due to space
  limitations, and because it does not add any essential information
  to our main results. The implemented computational version of the
  algorithm, employed in Section~\ref{sec:examples}, does indeed
  ensure such condition.

Our result in \cite{HBF2009:_CDC} established that a vector $v_1^\ell$
and feedback matrices $F_i^\ell$ such that $(A_i^\ell + B_i^\ell
F_i^\ell)v_1^\ell = \lambda_i^\ell v_1^\ell$ with $|\lambda_i^\ell| <
1$, for $i\in\Ind$, will exist at every iteration of
Algorithm~\ref{alg:main} if and only if there exist $K_i$ such that
$A_i^\CL$ as in (\ref{eq:49}) generate a solvable Lie algebra and
satisfy $\srad(A_i^\CL)<1$. Such result was of a theoretical nature,
since in a numerical implementation determining whether a vector is
exactly an eigenvector or is close to being so is an extremely
difficult task \cite{mormor_cdc97}.


\subsection{Approximate common eigenvector assignment}
\label{sec:comm-eigenv-assignm}

In this section, we provide a numerical implementation of
Procedure~$\cea$ executed by Algorithm~\ref{alg:main} and establish
some of its properties.
We give Procedure~$\cea$ in pseudocode first, and next define and
explain each of its parts.
\begin{procedure}[!ht]
  \SetTitleSty{texrm}{\normalsize}
  \SetKwComment{tcc}{\%}{}
  \KwIn{$A_i^\ell \in \C^{n_r \times n_r}$, $B_i^\ell \in \C^{n_r
      \times 1}$, for $i\in\Ind$, $\epsilon_c$, $\epsilon_d$}
  \KwOut{$v_1^\ell$, $F_i^\ell$ for $i\in\Ind$}
  \eIf{%
    $n_r=1$
  }%
  {%
    Select $A_i^{\ell,\CL}$ such that $|A_i^{\ell,\CL}|\le 1-\epsilon_c$\;
    $v_1^\ell \leftarrow 1$,\quad $F_i^\ell \leftarrow -(B_i^\ell)^{-1} A_i^\ell +
    A_i^{\ell,\CL}$ \;
  }%
  {%
    \eIf{%
      $\Sb(\epsilon_c,\epsilon_d) = \emptyset$
    }%
    {%
      \textbf{Stop}: unsuccessful termination.
    }%
    {%
      $v_1^\ell \leftarrow \argmin_{v\in\Sb(\epsilon_c,\epsilon_d)} J(v)$ \;
      $F_i^\ell \leftarrow M_i(v_1^\ell)$ \;
    }%
  }
  \caption{CEA()}
  \label{proc:cea}
\end{procedure}

Note that $n_r$ is the state dimension, which decreases by 1 at every
iteration of Algorithm~\ref{alg:main}. Hence, the case $n_r=1$ in
Procedure~$\cea$ corresponds to the trivial case of a one-dimensional
single-input system. If every subsystem is controllable, then $0 \neq
B_i^\ell \in \C$ and hence $A_i^{\ell,\CL}$ can be arbitrarily chosen.

For the case $n_r > 1$, Procedure~$\cea$ utilises the matrices
\begin{align}
  \label{eq:17}
  \E_i(v) &\dfn (vv^* - \id)A_i^\ell,\\
  \label{eq:18}
  \H_i(v) &\dfn (vv^* - \id)B_i^\ell,\\
  \label{eq:35}
  M_i(v) &\dfn -(\H_i(v)^* \H_i(v))^{-1} \H_i(v)^* \E_i(v),\\
  \label{eq:30}
  A_i^{\ell,\CL}(v) &\dfn A_i^\ell + B_i^\ell M_i(v),\\
  \intertext{the sets}
  \label{eq:60}
  \Sb_1 &\dfn \{v \in \C^{n_r} : \| v \| = 1 \}\\
  \label{eq:61}
  \Sb_2(\epsilon_c) &\dfn \bigcap_{i=1}^{\s} \Big\{v \in \C^{n_r} : \|
  A_i^{\ell,\CL}(v) v \| \le 1 - \epsilon_c \Big\},\\
  \label{eq:65}
  \Sb_3(\epsilon_d) &\dfn \bigcap_{i=1}^{\s} \Big\{v \in \C^{n_r} :
  \dist(v,\img B_i^\ell)\ge \epsilon_d \Big\},\\
  \label{eq:29}
  \Sb(\epsilon_c,\epsilon_d) &\dfn \Sb_1 \cap \Sb_2(\epsilon_c) \cap \Sb_3(\epsilon_d),\\
  \intertext{and the cost function}
  \label{eq:36}
  J(v) &\dfn \sum_{i=1}^{\s} \Big\| \big[\E_i(v) + \H_i(v) M_i(v) \big]  v \Big\|^2.
\end{align}

If $n_r>1$, Procedure~$\cea$ checks whether the set
$\Sb(\epsilon_c,\epsilon_d)$ given by (\ref{eq:29}) is empty. Comments
on the case $\Sb(\epsilon_c,\epsilon_d)=\emptyset$ are given later in
Remark~\ref{rem:infeas}. If $\Sb(\epsilon_c,\epsilon_d)\neq\emptyset$,
Procedure~$\cea$ searches for a vector $v_1^\ell$ that minimises
$J(v)$ as given by (\ref{eq:36}), subject to the constraint
$v\in\Sb(\epsilon_c,\epsilon_d)$. The constraint set
$\Sb(\epsilon_c,\epsilon_d)$ in (\ref{eq:29}) is the intersection of
three sets: $\Sb_1$ in (\ref{eq:60}), which constrains the search to
unit vectors; $\Sb_2(\epsilon_c)$ in (\ref{eq:61}), which imposes the
stability constraint $|\lambda_i^\ell|<1$, as we will shortly
demonstrate; and $\Sb_3(\epsilon_d)$ in (\ref{eq:65}), which is
included for technical reasons discussed next and justified in the
next section.

Note that if $\epsilon_d>0$ and $v\in\Sb_3(\epsilon_d)$, then
(\ref{eq:65}) implies that $v \notin \img B_i^\ell$, for
$i\in\Ind$. It follows that $\H_i(v)$ has the same (column) rank as
$B_i^\ell$ [see (\ref{eq:18})] and, in this case, $M_i(v)$ in
(\ref{eq:35}) is well-defined if $B_i^\ell$ has full column rank
(l.i. columns).

Thus, Procedure $\cea$ requires $B_i^\ell$ to have full column
rank. At the first iteration of Algorithm~\ref{alg:main}, we have
$B_i^1 = B_i$ and hence $B_i^1$ has full column rank by assumption.
At subsequent iterations of the algorithm, such condition is ensured
by the following result.
\begin{lemma}
  \label{lem:vnotimBfr}
  Let $B_i^\ell$ have full column rank and let $v_1^\ell$ be computed
  by Procedure~$\cea$. Then, $B_i^{\ell+1}$ has full column rank.
\end{lemma}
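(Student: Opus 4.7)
The plan is to unwind the definitions of $B_i^{\ell+1}$ and of $U_{\ell+1}$ and then invoke the selection rule that produces $v_1^\ell$ from Procedure~$\cea$. Recall from (\ref{eq:6}) that $B_i^{\ell+1} = U_{\ell+1}^* B_i^\ell$ and from (\ref{eq:99}) that $U_{\ell+1} = [v_2^\ell \mid \cdots \mid v_{n_r}^\ell]$, where $v_1^\ell, v_2^\ell, \ldots, v_{n_r}^\ell$ are the columns of the unitary matrix built in (\ref{eq:62}). In particular, this step is only performed when $n_r > 1$, i.e.\ in the nontrivial branch of Procedure~$\cea$.

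Since the columns of the unitary matrix in (\ref{eq:62}) form an orthonormal basis of $\C^{n_r}$, the kernel of $U_{\ell+1}^*$ is exactly $\spn\{v_1^\ell\}$. Because we are in the single-input setting, $B_i^\ell$ is a single nonzero column ($B_i^\ell$ has full column rank by hypothesis), so $B_i^{\ell+1}$ having full column rank is equivalent to $U_{\ell+1}^* B_i^\ell \ne 0$, which in turn is equivalent to $B_i^\ell \notin \spn\{v_1^\ell\}$. And because both $B_i^\ell$ and $v_1^\ell$ are nonzero vectors spanning one-dimensional subspaces of $\C^{n_r}$, the condition $B_i^\ell \in \spn\{v_1^\ell\}$ is equivalent to $v_1^\ell \in \spn\{B_i^\ell\} = \img B_i^\ell$.

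The proof is then closed by noting that in the $n_r>1$ branch, Procedure~$\cea$ selects $v_1^\ell$ from $\Sb(\epsilon_c,\epsilon_d) \subset \Sb_3(\epsilon_d)$. By (\ref{eq:65}), this yields
\[
\dist(v_1^\ell,\img B_i^\ell) \ge \epsilon_d > 0
\quad\text{for every } i\in\Ind,
\]
so $v_1^\ell \notin \img B_i^\ell$ and hence $B_i^{\ell+1}$ has full column rank.

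There is no real obstacle here; the lemma is essentially a bookkeeping result that makes explicit why the set $\Sb_3(\epsilon_d)$ was inserted into the constraint in (\ref{eq:29}). The only care needed is to check that the branch used ($n_r > 1$) coincides with the situation in which $U_{\ell+1}$ is actually constructed by Algorithm~\ref{alg:main}, so that the membership $v_1^\ell \in \Sb_3(\epsilon_d)$ is indeed available.
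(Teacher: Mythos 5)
Your proof is correct and follows essentially the same route as the paper's: both reduce the claim to showing $v_1^\ell \notin \img B_i^\ell$, which is guaranteed because $v_1^\ell$ is chosen from $\Sb(\epsilon_c,\epsilon_d) \subset \Sb_3(\epsilon_d)$, and then conclude via the orthonormal-basis structure of the unitary matrix in (\ref{eq:62}). You simply spell out the linear-algebra step (kernel of $U_{\ell+1}^*$ equals $\spn\{v_1^\ell\}$, and the single-column equivalence $B_i^\ell \in \spn\{v_1^\ell\} \Leftrightarrow v_1^\ell \in \img B_i^\ell$) that the paper dismisses as ``straightforward.''
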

\begin{proof}
  Since $v_1^\ell$ is effectively computed by Procedure~$\cea$ and
  since $\Sb(\epsilon_c,\epsilon_d)$ is topologically closed, then
  $v_1^\ell \in \Sb(\epsilon_c,\epsilon_d)$ and hence $v_1^\ell \notin
  \img B_i^\ell$. The result then follows straightforwardly from the
  fact that the columns of (\ref{eq:62}) form a basis, (\ref{eq:99})
  and (\ref{eq:6}).
\end{proof}

We are now ready to show in what sense the minimisation of $J(v)$ is
related to the assignment of a common eigenvector.
\begin{lemma}
  \label{lem:comeig}
  Let $\epsilon_c>0$, $\epsilon_d>0$ and
  $v\in\Sb(\epsilon_c,\epsilon_d)$. Then,
  \begin{enumerate}[i)]
  \item $J(v) \ge 0$,\label{item:1}
  \item $J(v)=0$ if and only if $A_i^{\ell,\CL}(v)v = \lambda_i^\ell v$
    with $|\lambda_i^\ell|\le 1-\epsilon_c$.\label{item:2}
  \item There exists $G_i\in\C^{m\times n_r}$ such that
    \begin{equation}
      \label{eq:78}
      (A_i^\ell + B_i^\ell G_i)v = \lambda_i^\ell v,
    \end{equation}
    if and only if $A_i^{\ell,\CL}(v) v = \lambda_i^\ell v$.\label{item:3}
  \end{enumerate}
\end{lemma}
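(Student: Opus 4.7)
The plan is to reduce all three parts of the lemma to a single algebraic identity that rewrites the summand of $J(v)$ in terms of the closed-loop map $A_i^{\ell,\CL}(v)$.

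First, I would observe that from the definitions~(\ref{eq:17}), (\ref{eq:18}) and~(\ref{eq:30}),
\begin{equation*}
  \E_i(v) + \H_i(v) M_i(v) = (vv^* - \id)\bigl(A_i^\ell + B_i^\ell M_i(v)\bigr) = (vv^*-\id)\, A_i^{\ell,\CL}(v),
\end{equation*}
so that $J(v) = \sum_{i=1}^{\s}\|(vv^*-\id)A_i^{\ell,\CL}(v)\,v\|^{2}$. Part~(\ref{item:1}) then becomes trivial: $J(v)$ is a sum of squared norms. For part~(\ref{item:2}), the constraint $v\in\Sb_1$ gives $\|v\|=1$, so $vv^*-\id = -(\id-vv^*)$ is minus the orthogonal projector onto $\spn\{v\}^\perp$; hence $(vv^*-\id)w=0$ if and only if $w\in\spn\{v\}$. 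Applied to $w=A_i^{\ell,\CL}(v)\,v$ for each $i$, this shows that $J(v)=0$ is equivalent to $A_i^{\ell,\CL}(v)\,v=\lambda_i^\ell v$ for some scalars $\lambda_i^\ell$. The norm bound $|\lambda_i^\ell|\le 1-\epsilon_c$ then falls out of $v\in\Sb_2(\epsilon_c)$ via $\|A_i^{\ell,\CL}(v)\,v\|=|\lambda_i^\ell|\,\|v\|=|\lambda_i^\ell|$.

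For part~(\ref{item:3}), the ``if'' direction is immediate: take $G_i = M_i(v)$ in~(\ref{eq:78}). For the ``only if'' direction, I would start from $(A_i^\ell + B_i^\ell G_i)v = \lambda_i^\ell v$, rewrite it as $B_i^\ell G_i v = \lambda_i^\ell v - A_i^\ell v$, and multiply on the left by $(vv^*-\id)$. Using $(vv^*-\id)v=0$, this produces the projected identity
\begin{equation*}
  \H_i(v)\, G_i v = -\,\E_i(v)\, v.
\end{equation*}
This is where the constraint $v\in\Sb_3(\epsilon_d)$ (together with $B_i^\ell$ having full column rank) is essential: it guarantees $v\notin\img B_i^\ell$, which forces $\H_i(v)$ to have the same column rank as $B_i^\ell$ (the projector $\id-vv^*$ is injective on $\img B_i^\ell$ because its kernel is $\spn\{v\}$). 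Consequently $\H_i(v)^*\H_i(v)$ is invertible, $M_i(v)$ in~(\ref{eq:35}) is well-defined, and left-multiplying the projected identity by $(\H_i(v)^*\H_i(v))^{-1}\H_i(v)^*$ yields $G_i v = M_i(v)\,v$. Substituting back gives $A_i^{\ell,\CL}(v)\,v = A_i^\ell v + B_i^\ell G_i v = \lambda_i^\ell v$, as required.

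The main obstacle is really just bookkeeping: one must be careful to invoke $v\in\Sb_3(\epsilon_d)$ precisely when the pseudoinverse appears, to justify that $M_i(v)$ is well-defined and that the projected equation can be solved uniquely for $G_i v$. Everything else is a consequence of the single identity $\E_i(v)+\H_i(v)M_i(v)=(vv^*-\id)A_i^{\ell,\CL}(v)$ together with the elementary geometric fact that $(vv^*-\id)$ annihilates exactly the multiples of $v$ when $\|v\|=1$.
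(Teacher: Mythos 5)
Your proposal is correct and follows essentially the same route as the paper: the same projection identity $\E_i(v)+\H_i(v)M_i(v)=(vv^*-\id)A_i^{\ell,\CL}(v)$, the same use of $\|v\|=1$ to conclude that $(vv^*-\id)w=0$ forces $w\in\spn\{v\}$, and the same full-column-rank argument for $\H_i(v)$ (guaranteed by $v\in\Sb_3(\epsilon_d)$) to recover $G_iv=M_i(v)v$ in part~iii). The only difference is presentational: you factor out the key identity up front and make the projector interpretation explicit, whereas the paper verifies the same steps inline.
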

\begin{proof}
  \ref{item:1}) Straightforward from (\ref{eq:36}) and since $J$ is
  well defined on $\Sb(\epsilon_c,\epsilon_d)$.

  \ref{item:2}) ($\Rightarrow$) $J(v)=0$ implies that
  \begin{align}
    0 &= \Big\| \big[ \E_i(v) + \H_i(v) M_i(v) \big] v \Big\| \notag\\
    \label{eq:92}
    &= \| (vv^*-\id) [A_i^\ell + B_i^\ell M_i(v)] v \|,
  \end{align}
  for $i\in\Ind$, where we have used (\ref{eq:17}), (\ref{eq:18})
  and (\ref{eq:36}). In turn, (\ref{eq:92}) implies that
  \begin{equation}
    \label{eq:93}
    [A_i^\ell + B_i^\ell M_i(v)] v = A_i^{\ell,\CL}(v) v = \lambda_i^\ell v,
  \end{equation}
  where we have used (\ref{eq:30}). From (\ref{eq:93}),
  $|\lambda_i^\ell| = \| A_i^{\ell,\CL}(v) v \| \le 1 - \epsilon_c <
  1$ since $v\in\Sb(\epsilon_c,\epsilon_d)$.

  ($\Leftarrow$) Left-multiply (\ref{eq:93}) by $(vv^* - \id)$ to
  obtain (\ref{eq:92}), which implies $J(v)=0$.

  \ref{item:3}) ($\Rightarrow$) Left-multiplying (\ref{eq:78}) by
  $(vv^* - \id)$ yields $[\E_i(v) + \H_i(v) G_i]v = 0$. Let $u_i \dfn
  G_i v$.  Then, $\E_i(v)v = -\H_i(v) u_i$. Since $\H_i(v)$ has full
  column rank, then $u_i = M_i(v) v = G_i v$ and hence by
  (\ref{eq:78}) $A_i^{\ell,\CL}(v) v = [A_i^\ell + B_i^\ell M_i(v)] v
  = \lambda_i^\ell v$.

  ($\Leftarrow$) Just take $G_i = M_i(v)$.
\end{proof}
Lemma~\ref{lem:comeig} shows that, for
$v\in\Sb(\epsilon_c,\epsilon_d)$, $J(v)=0$ if and only if $v$ can be
assigned by feedback as a common eigenvector with corresponding stable
eigenvalues, and $J(v) > 0$ otherwise. Therefore, the search for a
vector that minimises $J(v)$ can be interpreted as the search for a
vector that is closest to an assignable common eigenvector. 
\begin{rem}
  \label{rem:infeas}
  If $\Sb(\epsilon_c,\epsilon_d) = \emptyset$, then no feasible vector
  exists for the optimisation performed by
  Procedure~$\cea$. Therefore, no vector can be computed, and
  Procedure~$\cea$ and hence Algorithm~\ref{alg:main} will terminate
  unsuccessfully. In the next section, we show that there exist
  $\epsilon_c,\epsilon_d>0$ so that $\Sb(\epsilon_c,\epsilon_d) \neq
  \emptyset$ in the cases of interest.
\end{rem}

In the next section, we establish conditions under which the vector
and feedback matrices computed by Procedure~$\cea$ cause
Algorithm~\ref{alg:main} to yield feedback matrices so that the
closed-loop DTSS admits a CQLF.

\section{Main Results}
\label{sec:enabling-results}

We now derive several results that justify the use of
Algorithm~\ref{alg:main} for feedback control design of single-input
systems.  Recall that the constraint set $\Sb(\epsilon_c,\epsilon_d)$
in the optimisation solved in Procedure~$\cea$ forces the search to be
performed over vectors $v$ that satisfy $v\notin\img B_i^\ell$. The
following result justifies this constraint for single-input DTSS with
controllable subsystems.
\begin{lemma}
  \label{lem:SIgood}
  Let $n_r>1$, $A_i^\ell\in\C^{n_r\times n_r}$,
  $B_i^\ell\in\C^{n_r\times 1}$, $(A_i^\ell,B_i^\ell)$ be controllable
  and suppose that $v\neq 0$ and $F_i^\ell$ satisfy
  $(A_i^\ell+B_i^\ell F_i^\ell)v = \lambda_i v$, for
  $i\in\Ind$. Then $v\notin\img B_i^\ell$ for $i\in\Ind$.
\end{lemma}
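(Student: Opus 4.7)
The plan is to argue by contradiction, exploiting the fact that in the single-input setting $B_i^\ell$ is a single column, so $\img B_i^\ell$ is a one-dimensional subspace of $\C^{n_r}$. If $v \in \img B_i^\ell$, then since $v \neq 0$ there exists a scalar $\alpha \neq 0$ with $v = \alpha B_i^\ell$.

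Next I would substitute this into the eigenvector equation $(A_i^\ell + B_i^\ell F_i^\ell) v = \lambda_i v$ and use that $F_i^\ell v$ is a scalar (as $F_i^\ell \in \C^{1 \times n_r}$). Expanding gives
\begin{equation*}
\alpha A_i^\ell B_i^\ell + (F_i^\ell v)\, B_i^\ell = \lambda_i \alpha B_i^\ell,
\end{equation*}
and rearranging yields $A_i^\ell B_i^\ell = \beta B_i^\ell$ for some scalar $\beta \in \C$. Hence $B_i^\ell$ is an eigenvector of $A_i^\ell$.

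From this I would conclude that all iterated images $(A_i^\ell)^k B_i^\ell$ are scalar multiples of $B_i^\ell$, so the controllability matrix $[B_i^\ell \mid A_i^\ell B_i^\ell \mid \cdots \mid (A_i^\ell)^{n_r - 1} B_i^\ell]$ has rank $1$. Since $n_r > 1$, this contradicts controllability of $(A_i^\ell, B_i^\ell)$, completing the proof.

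The argument is essentially a one-line calculation once the right substitution is made; the only subtlety worth emphasising is that the conclusion really does depend on the single-input hypothesis, since in the multi-input case $F_i^\ell v$ is a vector rather than a scalar, so $B_i^\ell F_i^\ell v$ need not be collinear with $v$, and one cannot derive that $B_i^\ell$ is an eigenvector of $A_i^\ell$. Therefore there is no genuine obstacle; the main thing to get right is to invoke the single-input structure at the correct step and then quote controllability to finish.
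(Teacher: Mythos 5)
Your proof is correct and follows essentially the same route as the paper: assume $v\in\img B_k^\ell$, substitute $v=\alpha B_k^\ell$ into the eigenvector equation, conclude that $B_k^\ell$ and $A_k^\ell B_k^\ell$ are linearly dependent, and contradict controllability since $n_r>1$. Your version merely spells out the rank-one controllability matrix argument that the paper leaves implicit.
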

\begin{proof}
  Suppose for a contradiction that $v\in\img B_k^\ell$ for some
  $k\in\Ind$. Then $v=B_k^\ell u$ for some $0\neq u\in\C$. We
  have $(A_k^\ell+B_k^\ell F_k^\ell)B_k^\ell u = \lambda_k B_k^\ell u$
  and hence $B_k^\ell$ and $A_k^\ell B_k^\ell$ are linearly dependent,
  which prevents the pair $(A_k^\ell,B_k^\ell)$ from being
  controllable because $n_r>1$.
\end{proof}
Lemma~\ref{lem:SIgood} establishes that, for a single-input DTSS with
controllable subsystems, every vector that can be made a common
eigenvector by feedback will not be contained in the image of any
input matrix $B_i^\ell$. Therefore, in this case such constraint on
the optimisation problem is justified.

In the sequel, we will say that a set of matrix pairs is controllable
if every matrix pair in the set is controllable. We also require the
following definitions.
\begin{defn}[CEAS, $\gamma$-CEAS]
  \label{def:CEAS}
  A set of matrix pairs $\Z^\ell = \{(A_i^\ell\in\C^{n_r\times n_r},
  B_i^\ell\in\C^{n_r\times 1}) : i\in\Ind\}$ is said to be \emph{CEAS
    (Common Eigenvector Assignable with Stability)} if there exist
  $0\neq v\in\C^{n_r}$, $\lambda_i^\ell\in\C$ with $|\lambda_i^\ell|<1$, and
  matrices $F_i^\ell\in\C^{1\times n_r}$, such that
  \begin{equation}
    \label{eq:39}
    (A_i^\ell +B_i^\ell F_i^\ell)v = \lambda_i^\ell v,\quad\text{for }i\in\Ind,
  \end{equation}
  If $\Z^\ell$ is CEAS, we say that $v\in\C^{n_r}$ and $F_i\in\C^{1\times
    n_r}$ are \emph{compatible with $\Z^\ell$} whenever (\ref{eq:39})
  holds for $|\lambda_i^\ell|<1$. If (\ref{eq:39}) holds with
  $|\lambda_i|\le 1-\gamma$ for some $0< \gamma\le 1$, we say that
  $\Z^\ell$ is $\gamma$-CEAS and refer to the corresponding $v$ and
  $F_i^\ell$ as $\gamma$-compatible with $\Z^\ell$.
\end{defn}
\begin{defn}[SLASF, $\gamma$-SLASF]
  \label{def:SLASF}
  A set of matrix pairs $\Z = \{(A_i \in \C^{n\times n}, B_i \in
  \C^{n\times 1}) : i\in\Ind\}$ is said to be \emph{SLASF (Solvable
    Lie Algebra with Stability by Feedback)} if there exist $K_i \in
  \C^{1\times n}$ such that $A_i^\CL$ as in (\ref{eq:49}) generate a
  solvable Lie algebra and satisfy $\srad(A_i^\CL)<1$. If $\Z$ is
  SLASF, we say that $K_i \in \C^{1\times n}$ are \emph{compatible
    with $\Z$} if $A_i^\CL$ as in (\ref{eq:49}) generate a solvable
  Lie algebra and satisfy $\srad(A_i^\CL)<1$. If $K_i$ exist so that,
  in addition, $\srad(A_i^\CL)\le 1-\gamma$ for some $0< \gamma\le 1$,
  we say that $\Z$ is $\gamma$-SLASF and refer to the corresponding
  $K_i$ as $\gamma$-compatible with $\Z$.
\end{defn}
We next state a version of our previous result of \cite{HBF2009:_CDC}
for the specific case of single-input systems as follows.

\begin{thm}
  \label{thm:prevsic}
  Let $\Z = \{(A_i \in \C^{n\times n}, B_i \in \C^{n\times 1}) :
  i\in\Ind\}$, consider Algorithm~\ref{alg:main} for some suitable
  choice of $\epsilon_c^\ell$ and $\epsilon_d^\ell$, and let $\Z^\ell
  = \{(A_i^\ell,B_i^\ell) : i\in\Ind\}$ and $0<\gamma\le 1$. Then, $\Z$
  is $\gamma$-SLASF if and only if $\Z^\ell$ is $\gamma$-CEAS and
  $v_1^\ell$, $F_i^\ell$ are $\gamma$-compatible with $\Z^\ell$ for
  $\ell=1,\ldots,n$.
\end{thm}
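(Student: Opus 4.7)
The plan is to establish the equivalence by induction on the algorithm's iteration counter $\ell$, treating the two implications separately. The reverse direction is an essentially direct consequence of Remark~\ref{rem:ut}. The forward direction follows the general strategy of our previous work \cite{HBF2009:_CDC} but requires explicitly tracking the $\gamma$ bound at each deflation step.

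For the reverse direction, I would assume that at every iteration $\ell=1,\ldots,n$ the pair $(v_1^\ell, F_i^\ell)$ is $\gamma$-compatible with $\Z^\ell$, so $(A_i^\ell+B_i^\ell F_i^\ell)v_1^\ell=\lambda_i^\ell v_1^\ell$ with $|\lambda_i^\ell|\le 1-\gamma$. Remark~\ref{rem:ut} then implies that the algorithm's output $U$ and $K_i$ satisfy $U^* A_i^\CL U$ upper triangular with diagonal entries $\lambda_i^1,\ldots,\lambda_i^n$. Simultaneous unitary triangularisability forces $\{A_i^\CL:i\in\Ind\}$ to generate a solvable Lie algebra (a subalgebra of the upper triangular matrices under commutation), and since the eigenvalues of a triangular matrix are precisely its diagonal entries, $\srad(A_i^\CL)=\max_\ell|\lambda_i^\ell|\le 1-\gamma$. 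Hence $\Z$ is $\gamma$-SLASF, witnessed by the $K_i$ computed by Algorithm~\ref{alg:main}.

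For the forward direction, I would start from the assumption that $\Z$ is $\gamma$-SLASF with compatible $K_i^\star$, so that $A_i^\star\dfn A_i+B_i K_i^\star$ generate a solvable Lie algebra and $\srad(A_i^\star)\le 1-\gamma$. The matrix characterisation of solvability recalled after Lemma~\ref{lem:stab} supplies a unitary $T$ with $T^* A_i^\star T$ upper triangular for every $i\in\Ind$; the first column of $T$ is a common eigenvector of $\{A_i^\star\}$ whose associated eigenvalues, being diagonal entries of the triangularisation, have modulus at most $\srad(A_i^\star)\le 1-\gamma$. This shows $\Z^1=\Z$ is $\gamma$-CEAS. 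The induction step is to show that, after the algorithm's deflation (\ref{eq:99})--(\ref{eq:6}) using \emph{any} $\gamma$-compatible $(v_1^\ell, F_i^\ell)$, the reduced set $\Z^{\ell+1}$ is again $\gamma$-SLASF. Forming the unitary completion $W=[v_1^\ell|U_{\ell+1}]$ from (\ref{eq:62})--(\ref{eq:99}) and using $(A_i^\ell+B_i^\ell F_i^\ell)v_1^\ell=\lambda_i^\ell v_1^\ell$, the matrix $W^*(A_i^\ell+B_i^\ell F_i^\ell)W$ is block upper triangular, with $\lambda_i^\ell$ in the $(1,1)$ position and $A_i^{\ell+1}+B_i^{\ell+1}(F_i^\ell U_{\ell+1})$ as the lower-right $(n_r-1)\times(n_r-1)$ block. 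One then constructs a candidate feedback $K_i^{\ell+1}$ for the deflated pair $(A_i^{\ell+1},B_i^{\ell+1})$ by pulling back $K_i^\star$ through the accumulated unitary $\prod_{r=1}^{\ell+1} U_r$ and subtracting the corrections already recorded in (\ref{eq:9}); solvability of the Lie algebra generated by the resulting closed-loop matrices is inherited under the block-triangular projection (homomorphic images of solvable Lie algebras are solvable), while the spectral radius bound $1-\gamma$ survives because the spectrum of the reduced block is a subset of that of the corresponding $A_i^\star$.

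The hard part will be the inductive bookkeeping just sketched: verifying that, independently of the specific $\gamma$-compatible choice made by Procedure~$\cea$ at step $\ell$, the constructed $K_i^{\ell+1}$ genuinely witnesses $\gamma$-SLASF for $\Z^{\ell+1}$ with the \emph{same} constant $\gamma$. This is precisely the ``irrelevance of choice'' observation~(ii) from the introduction, promoted to the quantitative $\gamma$ setting; the single-input hypothesis enters through Lemma~\ref{lem:SIgood}, which guarantees $v_1^\ell\notin\img B_i^\ell$ so that $B_i^{\ell+1}$ retains full column rank and the deflation recursion is well defined. Once this step is made rigorous the induction closes and the theorem follows.
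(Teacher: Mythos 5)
The paper does not actually prove Theorem~\ref{thm:prevsic}: it is introduced as ``a version of our previous result of \cite{HBF2009:_CDC}'' and imported without proof, so there is no in-paper argument to compare yours against. Judged on its own merits, your reverse direction is correct and is exactly the intended reading of Remark~\ref{rem:ut}: $\gamma$-compatible outputs at every iteration give $U^* A_i^\CL U$ upper triangular with diagonal entries of modulus at most $1-\gamma$, hence solvability of the generated Lie algebra and $\srad(A_i^\CL)\le 1-\gamma$.

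The forward direction, however, has a genuine gap at precisely the step you flag as ``the hard part,'' and the mechanism you propose does not close it. You pull the witness feedback $K_i^\star$ back through the accumulated unitaries and assert that ``the spectrum of the reduced block is a subset of that of the corresponding $A_i^\star$.'' But the span of the chosen $v_1^\ell$ is invariant only for $A_i^\ell+B_i^\ell F_i^\ell$; there is no reason for it to be invariant for the closed loop obtained from $K_i^\star$, i.e.\ $v_1^\ell$ need not lie in the triangularising flag of the $A_i^\star$. Consequently the pulled-back $K_i^\star$ does not put $W^*\big(A_i^\ell+B_i^\ell(\cdot)\big)W$ into block upper triangular form, the deflated block is not a compression of $A_i^\star$ to an invariant subspace, and neither solvability nor the bound $1-\gamma$ for $\Z^{\ell+1}$ follows by passing to a quotient. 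What must be proved is the quantitative ``irrelevance of choice'' statement: if $\Z^\ell$ is $\gamma$-SLASF and $(v_1^\ell,F_i^\ell)$ is \emph{any} $\gamma$-compatible pair, then $\Z^{\ell+1}$ is again $\gamma$-SLASF. The case where $v_1^\ell$ spans the first space of the witness flag is easy --- since $v_1^\ell\notin\img B_i^\ell$ (Lemma~\ref{lem:SIgood}) and $\img B_i^\ell$ is one-dimensional, the assignable eigenvalue for that direction is unique, so the eigenvalues agree and the quotient argument goes through --- but the case where $v_1^\ell$ is \emph{not} in the witness flag requires a separate argument (this is the substantive content of \cite{HBF2009:_CDC}) that your sketch does not supply. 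Until that case is handled, the induction does not close.
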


Suppose that $\hat\Z^\ell = \{(\hat A_i^\ell,\hat B_i^\ell) :
i\in\Ind\}$ is CEAS. Whenever $\hat F_i^\ell$, $\hat v_1^\ell$ are
compatible with $\hat\Z^\ell$ and a unitary matrix having $\hat
v_1^\ell$ as its first column is given:
\begin{gather}
  \label{eq:74}
  \big[\hat v_1^\ell | \hat v_2^\ell | \cdots | \hat
  v_{n-\ell+1}^\ell\big],
\end{gather}
then $\hat A_i^{\ell,\CL}$, $\hat U_{\ell+1}$, $\hat A_i^{\ell+1}$ and
$\hat B_i^{\ell+1}$ will denote the matrices given by (\ref{eq:33}),
(\ref{eq:99}), (\ref{eq:100}) and (\ref{eq:6}), respectively, when the
hatted matrices are employed.

The main result of this section, namely Theorem~\ref{thm:KiCQLF},
requires the following preliminary theorem.
\begin{thm}
  \label{thm:num1}
  Let $\hat\Z^\ell = \{(\hat A_i^\ell,\hat B_i^\ell) : i\in\Ind\}$ be
  $\gamma$-CEAS and controllable, with $\hat A_i^\ell\in\C^{n_r \times
    n_r}$, $\hat B_i^\ell \in \C^{n_r \times 1}$, $n_r > 1$, $0<\gamma
  \le 1$. Consider the following sets
  \begin{align}
    \label{eq:setT}
    \hat\Tb^\ell_{i,\gamma} &= \{v \in \Sb_1 :  
    \exists F\in\C^{1\times n_r}, \lambda\in\C \text{ such that } 
    (\hat A_i^\ell + \hat B_i^\ell F)v = \lambda v \text{ with }
    |\lambda| \le 1-\gamma \},\\
    \label{eq:72}
    \hat\Tb^\ell_\gamma &= \bigcap_{i=1}^\s \hat\Tb^\ell_{i,\gamma},
  \end{align}
  and the following quantity
  \begin{align}
    \label{eq:48}
    \epsilon_d^{\ell,\star} &\dfn \inf_{v\in\hat\Tb_{\gamma}^\ell}
    min_{i\in\Ind} \dist(v,\img \hat B_i^\ell).
  \end{align}
  Then, $\epsilon_d^{\ell,\star} > 0$ and each $0 < \epsilon_c^\ell <
  \gamma$ and $0 < \epsilon_d^\ell < \epsilon_d^{\ell,\star}$ ensure
  that for every $\epsilon>0$ there exists a corresponding $\delta>0$
  so that for each $A_i^\ell$, $B_i^\ell$ satisfying
    \begin{equation*}
      \left\lbrace
        \begin{matrix}
          \| \hat A_i^\ell - A_i^\ell \| < \delta,\\
          \| \hat B_i^\ell - B_i^\ell \| < \delta,
        \end{matrix} \right.
    \end{equation*}
    there exist $\hat v_1^\ell \in \hat\Sb(\epsilon_c^\ell/2,
    \epsilon_d^\ell/2)$ and $\hat F_i^\ell$ compatible with
    $\hat\Z^\ell$ ($\hat v_1^\ell$, $\hat F_i^\ell$ may depend on the
    specific $A_i^\ell$, $B_i^\ell$), and a unitary matrix
    (\ref{eq:74}) that cause
    \begin{enumerate}[i)]
    \item \begin{align}
        \label{eq:57}
        &\| \hat v_1^\ell - v_1^\ell \| < \epsilon,\\
        \label{eq:85}
        &\| \hat F_i^\ell - F_i^\ell \| < \epsilon,
      \end{align}
      where $v_1^\ell$ and $F_i^\ell$ are the output of
      Procedure~$\cea$ with $A_i^\ell$, $B_i^\ell$,
      $\epsilon_c=\epsilon_c^\ell$, and $\epsilon_d=\epsilon_d^\ell$
      as inputs.\label{item:5}
    \item \begin{align} \notag
        \| \hat U_{\ell+1} - U_{\ell+1} \| &< \epsilon,\\
        \label{eq:38}
        \| \hat A_i^{\ell+1} - A_i^{\ell+1} \| &< \epsilon,\\
        \label{eq:98}  
        \| \hat B_i^{\ell+1} - B_i^{\ell+1} \| &< \epsilon,
      \end{align}
      where $U_{\ell+1}$, $A_i^{\ell+1}$ and $B_i^{\ell+1}$ are the
      matrices computed at iteration $\ell$ of Algorithm~\ref{alg:main}
      from $A_i^\ell$ and $B_i^\ell$, with $v_1^\ell$ and $F_i^\ell$
      as above.\label{item:6}
    \end{enumerate}
\end{thm}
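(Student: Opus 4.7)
The plan is to prove the theorem in two stages: first the positivity of $\epsilon_d^{\ell,\star}$, then the robustness/continuity claims \ref{item:5} and \ref{item:6}.

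For the positivity, I would show that $\hat\Tb^\ell_\gamma$ is a compact subset of $\Sb_1$ and invoke Lemma~\ref{lem:SIgood}. Closedness of $\hat\Tb^\ell_{i,\gamma}$ follows by a direct sequential argument: given $v_n\in\hat\Tb^\ell_{i,\gamma}$ with $v_n\to v$, pick corresponding $F_n,\lambda_n$ and set $u_n\dfn F_n v_n\in\C$. Then $\hat B_i^\ell u_n=\lambda_n v_n-\hat A_i^\ell v_n$ together with $\hat B_i^\ell\ne 0$ (controllability forbids $\hat B_i^\ell=0$) makes $u_n$ bounded; a convergent subsequence $\lambda_n\to\lambda$, $u_n\to u$ yields $(\hat A_i^\ell+\hat B_i^\ell(uv^*))v=\lambda v$ with $|\lambda|\le 1-\gamma$, so $v\in\hat\Tb^\ell_{i,\gamma}$. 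Since $\hat\Tb^\ell_\gamma$ is then a closed subset of the unit sphere, Lemma~\ref{lem:SIgood} forces it to be disjoint from each $\img\hat B_i^\ell$, and the continuous function $v\mapsto\min_i\dist(v,\img\hat B_i^\ell)$ attains a strictly positive minimum on it.

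For the robustness claims, fix $\epsilon_c^\ell<\gamma$ and $\epsilon_d^\ell<\epsilon_d^{\ell,\star}$ and consider the nominal zero set $\hat Z\dfn\{v\in\hat\Sb(\epsilon_c^\ell,\epsilon_d^\ell):\hat J(v)=0\}$, where $\hat\Sb$ and $\hat J$ are built from the hatted matrices. By Lemma~\ref{lem:comeig} (item iii) and the definition of $\epsilon_d^{\ell,\star}$, $\hat\Tb^\ell_\gamma\subseteq\hat Z$, so $\hat Z$ is nonempty and compact. The core step is a joint continuity argument in the data: as $\delta\to 0$, the maps $M_i(\cdot)$, $A_i^{\ell,\CL}(\cdot)$ and $J(\cdot)$ from (\ref{eq:35})--(\ref{eq:36}) converge uniformly to their hatted counterparts on any compact region where $\dist(v,\img B_i^\ell)$ is bounded below, and the constraint set $\Sb(\epsilon_c^\ell,\epsilon_d^\ell)$ varies continuously in the Hausdorff sense. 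Testing $J$ against a fixed $\hat v^\star\in\hat\Tb^\ell_\gamma$ --- which, thanks to the strict margins $\gamma-\epsilon_c^\ell>0$ and $\epsilon_d^{\ell,\star}-\epsilon_d^\ell>0$, remains in $\Sb(\epsilon_c^\ell,\epsilon_d^\ell)$ for small $\delta$ --- gives $J(v_1^\ell)\to 0$, and the same margin places $v_1^\ell$ in $\hat\Sb(\epsilon_c^\ell/2,\epsilon_d^\ell/2)$. A compactness/contradiction argument (any sub-limit of $v_1^\ell$ must lie in $\hat Z$ by continuity of $\hat J$ and the constraints) then gives $\dist(v_1^\ell,\hat Z)\to 0$; taking $\hat v_1^\ell$ as the point of $\hat Z$ nearest to $v_1^\ell$ yields (\ref{eq:57}), and $\hat v_1^\ell\in\hat Z\subseteq\hat\Sb(\epsilon_c^\ell,\epsilon_d^\ell)\subseteq\hat\Sb(\epsilon_c^\ell/2,\epsilon_d^\ell/2)$.

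Compatibility of $\hat v_1^\ell$ and $\hat F_i^\ell\dfn\hat M_i(\hat v_1^\ell)$ with $\hat\Z^\ell$ is then immediate from Lemma~\ref{lem:comeig} (item ii), and (\ref{eq:85}) follows from joint continuity of $M_i$ in $(A,B,v)$ on the region $\{\dist(v,\img B)\ge\epsilon_d^\ell/2\}$, where the Gram factor in (\ref{eq:35}) remains uniformly invertible. For item \ref{item:6} I would construct the unitary completion (\ref{eq:74}) for $\hat v_1^\ell$ by a Gram--Schmidt procedure seeded by the completion $v_2^\ell,\ldots,v_{n_r}^\ell$ already chosen by the algorithm, so that $\hat U_{\ell+1}$ depends continuously on $\hat v_1^\ell$ near $v_1^\ell$ and $\|\hat U_{\ell+1}-U_{\ell+1}\|$ is controlled by $\|\hat v_1^\ell-v_1^\ell\|$; then (\ref{eq:38}) and (\ref{eq:98}) drop out of continuity of (\ref{eq:100}) and (\ref{eq:6}). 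The main obstacle is the joint continuity argument of the previous paragraph: since the feasible set itself depends on the perturbed data, standard argmin continuity does not apply, and the whole argument hinges on the strict margins $\gamma-\epsilon_c^\ell$ and $\epsilon_d^{\ell,\star}-\epsilon_d^\ell$ to absorb perturbations on both the cost and the constraints.
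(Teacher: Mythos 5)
Your proposal is correct and follows essentially the same route as the paper's proof: compactness of $\hat\Tb^\ell_\gamma$ plus Lemma~\ref{lem:SIgood} for $\epsilon_d^{\ell,\star}>0$; the strict margins $\gamma-\epsilon_c^\ell$ and $\epsilon_d^{\ell,\star}-\epsilon_d^\ell$ to keep a fixed compatible vector feasible for the perturbed problem and to place the perturbed feasible set inside $\hat\Sb(\epsilon_c^\ell/2,\epsilon_d^\ell/2)$; uniform convergence of $J$ to $\hat J$ and a compactness argument pushing the approximate minimiser to the zero set of $\hat J$; and continuity of $M_i$ and of the unitary completion for items (\ref{eq:85})--(\ref{eq:98}). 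The only differences are cosmetic (you prove closedness of $\hat\Tb^\ell_{i,\gamma}$ by bounding $u_n=F_nv_n$ directly rather than via the projector $\hat P_i^\ell$, and you phrase the perturbation bounds sequentially rather than with explicit $\delta$'s).
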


Theorem~\ref{thm:num1}~\ref{item:5}) establishes that if the matrices
$A_i^\ell$ and $B_i^\ell$ given as inputs to Procedure~$\cea$ are
sufficiently close to some $\hat A_i^\ell$ and $\hat B_i^\ell$ which
form a CEAS set $\hat\Z^\ell$, then the vector $v_1^\ell$ and feedback
matrices $F_i^\ell$ computed by such procedure will be as close as
desired to some $\hat v_1^\ell$ and $\hat F_i^\ell$ compatible with
$\hat\Z^\ell$.  In general, whether the (given) matrices $A_i^\ell$,
$B_i^\ell$ are sufficiently close to some $\hat A_i^\ell$, $\hat
B_i^\ell$ with the required property will not be known. However, the
significance of this result lies precisely in the fact that it
establishes a type of continuity relation between the result of the
procedure and an ``exact'' result, even if the latter result is not
known. In addition, such continuity justifies the numerical
implementation of the procedure, since numerical computation will
always yield an approximate result. 

In broad terms, Theorem~\ref{thm:num1}~\ref{item:6}) shows that if, at
step $\ell$ of Algorithm~\ref{alg:main}, the matrices $A_i^\ell$ and
$B_i^\ell$ are sufficiently close to some ``exact'' ones, then the
same will happen at step $\ell+1$. The proof of
Theorem~\ref{thm:num1} is highly non-trivial and given in
Appendix~\ref{sec:proof-theorem}.

We are now ready to state the main result of the paper.
\begin{thm}
  \label{thm:KiCQLF}
  Let $\hat\Z = \{(\hat A_i\in\C^{n \times n},\hat B_i\in \C^{n \times
    1}) : i\in\Ind\}$ be SLASF and controllable. Then, there exist
  $\epsilon_c^\star,\epsilon_d^\star>0$ such that each $0 < \epsilon_c
  < \epsilon_c^\star$ and $0 < \epsilon_d < \epsilon_d^\star$ ensure
  that
  \begin{enumerate}[i)]
  \item For every $\epsilon>0$ there exists a corresponding $\delta>0$
    so that for each $A_i,B_i$ satisfying
    \begin{align}
      \label{eq:4}
      \| \hat A_i - A_i \| &< \delta,\\
      \label{eq:7}
      \| \hat B_i - B_i \| &< \delta,
    \end{align}
    there exist $\hat K_i$ compatible with $\hat\Z$ ($\hat K_i$ may
    depend on the specific $A_i$, $B_i$) that cause $\srad(\hat
    A_i^\CL)\le 1 - \epsilon_c/2$ and
    \begin{equation}
      \label{eq:2}
      \| \hat A_i^\CL - A_i^\CL \| < \epsilon,
    \end{equation}
    where $\hat A_i^\CL=\hat A_i + \hat B_i \hat K_i$, and $A_i^\CL$
    satisfies (\ref{eq:49}) with $K_i$ obtained as output of
    Algorithm~\ref{alg:main} with $A_i, B_i, \epsilon_c$, and
    $\epsilon_d$ as inputs.\label{item:4}
  \item There exists $\epsilon>0$ for which the closed-loop DTSS
    (\ref{eq:26})--(\ref{eq:49}) admits a CQLF, provided
    (\ref{eq:4})--(\ref{eq:7}) are satisfied with $\delta$
    corresponding to $\epsilon$ as in \ref{item:4})
    above.\label{item:7}
  \end{enumerate}
\end{thm}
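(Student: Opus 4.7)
The plan is to apply Theorem~\ref{thm:num1} inductively across the $n$ iterations of Algorithm~\ref{alg:main}, assemble the per-step quantities into overall $\hat K_i$, and then use openness of the discrete-time Lyapunov LMI to obtain part~\ref{item:7}). Since $\hat\Z$ is SLASF and $\s$ is finite, there is some $\gamma_0>0$ for which $\hat\Z$ is actually $\gamma_0$-SLASF. By Theorem~\ref{thm:prevsic} (applied inside the hatted run of the algorithm), the reduced-order hatted pairs $\hat\Z^\ell=\{(\hat A_i^\ell,\hat B_i^\ell):i\in\Ind\}$ produced along the iterations, starting from $\hat\Z^1=\hat\Z$, are $\gamma_0$-CEAS at every step $\ell$; controllability of the pairs is preserved because each transition is a state-feedback loop followed by a unitary similarity together with an isometric change of basis on the input. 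This lets me set $\epsilon_c^\star\dfn\gamma_0$ and $\epsilon_d^\star\dfn\min_\ell\epsilon_d^{\ell,\star}$, with $\epsilon_d^{\ell,\star}$ as in (\ref{eq:48}).

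For part~\ref{item:4}), given a target $\epsilon>0$ I would propagate the required closeness backward through the algorithm: Theorem~\ref{thm:num1}~\ref{item:6}) at step $\ell$ provides some $\delta_\ell$ such that level-$\ell$ inputs within $\delta_\ell$ yield level-$(\ell{+}1)$ inputs within the tolerance $\delta_{\ell+1}$ required at the next step. Iterating from $\ell=n-1$ down to $\ell=1$ produces the required $\delta\dfn\delta_1$. At every level, part~\ref{item:5}) of Theorem~\ref{thm:num1} simultaneously makes $\|\hat F_i^\ell-F_i^\ell\|$ and $\|\hat U_{\ell+1}-U_{\ell+1}\|$ arbitrarily small, so assembling $\hat K_i$ from the $\hat F_i^\ell$ and $\hat U_r$ through the formula (\ref{eq:9}) and using the triangle inequality on the finite product yields $\|\hat K_i-K_i\|$ small, and therefore $\|\hat A_i^\CL-A_i^\CL\|<\epsilon$ via (\ref{eq:4})--(\ref{eq:7}). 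The bound $\srad(\hat A_i^\CL)\le 1-\epsilon_c/2$ follows from Remark~\ref{rem:ut} applied to the hatted construction: the hatted data produced by Theorem~\ref{thm:num1} lies in $\hat\Sb(\epsilon_c/2,\epsilon_d/2)$, so every diagonal entry $\hat\lambda_i^\ell$ of the simultaneous upper triangularisation of $\{\hat A_i^\CL\}$ satisfies $|\hat\lambda_i^\ell|\le 1-\epsilon_c/2$. That simultaneous triangularisation also certifies solvability of the Lie algebra generated by $\{\hat A_i^\CL\}$, so $\hat K_i$ is genuinely compatible with $\hat\Z$.

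Part~\ref{item:7}) is then a short continuity argument: by Lemma~\ref{lem:stab} there exists $\hat P=\hat P^*\succ 0$ with $\hat A_i^{\CL,*}\hat P\hat A_i^\CL-\hat P\prec 0$ for all $i$, a strict LMI in the closed-loop matrices; hence there is some $\epsilon>0$ for which the same $\hat P$ continues to certify a CQLF whenever $\|\hat A_i^\CL-A_i^\CL\|<\epsilon$, and feeding this $\epsilon$ into part~\ref{item:4}) supplies the matching $\delta$. The main obstacle I anticipate is that the $\hat v_1^\ell,\hat F_i^\ell$ supplied by Theorem~\ref{thm:num1} may depend on the specific non-hatted inputs, so the downstream hatted pair $\hat\Z^{\ell+1}$ is \emph{not} canonically fixed and consequently $\epsilon_d^{\ell+1,\star}$ does not a priori exist as a single number upon which one can base $\epsilon_d^\star$. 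Handling this needs a compactness/continuity argument: by choosing $\delta_\ell$ small enough, the hatted pairs that can arise at level $\ell+1$ lie in a compact neighbourhood of any fixed reference pair produced from $\hat\Z^\ell$, and the infimum in (\ref{eq:48}) depends continuously enough on that pair to remain uniformly bounded below on that neighbourhood; alternatively one reinterprets $\epsilon_d^\star$ level-by-level and tightens $\delta$ accordingly.
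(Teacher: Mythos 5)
Your overall architecture matches the paper's: reduce to $\gamma$-SLASF, invoke Theorem~\ref{thm:prevsic} to get $\gamma$-CEAS at every level, chain Theorem~\ref{thm:num1} backward from $\ell=n-1$ to $\ell=1$ to build the $\delta_\ell$'s, assemble $\hat K_i$ via (\ref{eq:9}), read off $\srad(\hat A_i^\CL)\le 1-\epsilon_c/2$ from the triangularisation of Remark~\ref{rem:ut}, and get part~ii) from Lemma~\ref{lem:stab} plus strictness of the Lyapunov LMI. You also correctly identify the central obstacle: $\epsilon_d^{\ell,\star}$ for $\ell\ge 2$ is not a single number because $\hat\Z^{\ell+1}$ depends on which compatible $\hat v_1^\ell$, $\hat F_i^\ell$, $\hat U_{\ell+1}$ arise, so your opening definition $\epsilon_d^\star\dfn\min_\ell\epsilon_d^{\ell,\star}$ is not yet well posed.

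The gap is that your proposed repair of this obstacle does not work as stated, and repairing it is in fact the bulk of the paper's proof. First, the hatted data reachable at level $\ell+1$ is \emph{not} confined to a small neighbourhood of one fixed reference pair: as $(A_i,B_i)$ ranges over the $\delta$-ball, the computed $v_1^\ell$ (a global minimiser of a perturbed cost) can jump between far-apart feasible points, so the compatible $\hat v_1^\ell$ sweeps out essentially all of $\hat\Tb_\gamma^\ell$, and one must take the infimum over that entire set --- this is the nested quantity $\eta_\gamma^{k,\ell}$ of (\ref{eq:53}). Second, saying the infimum in (\ref{eq:48}) ``depends continuously enough'' on the pair begs the question: the infimum of a function over a parameter-dependent constraint set $\hat\Tb_\gamma^{\ell}(\hat A_i^{\ell-1},\hat B_i^{\ell-1},\hat U_\ell)$ is not continuous in general; the paper has to prove upper semicontinuity of the set-valued map $\Rb$ (Claim~\ref{clm:usc}, via closedness of its graph and the projector characterisation (\ref{eq:Talt})--(\ref{eq:P})) and then invoke the marginal-function theorem of \cite{aubfra_book90} to conclude that the infimum is attained and positive. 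Third, you do not address that the compatible feedback $\hat F_i^\ell$ is not unique (only $\hat F_i^\ell v_1^\ell$ is pinned down), so $\hat A_i^{\ell+1}$ and hence $\hat\Tb_\gamma^{\ell+1}$ could a priori depend on that choice; the paper's Claim~\ref{clm:quilombo}~a) shows, via $\hat P_i^\ell\hat B_i^\ell=0$ and the reduction to the orthogonal complement $\hat v_1^{\ell-1}$, that it does not. Without these three pieces the positivity of $\epsilon_d^\star$ --- and with it the applicability of Theorem~\ref{thm:num1} beyond the first iteration --- is unestablished.
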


Theorem~\ref{thm:KiCQLF} establishes that Algorithm~\ref{alg:main}
will compute suitable feedback matrices not only in the ``exact'' case
when the given $A_i$, $B_i$ form a SLASF set, but also when they are
close to other (possibly unknown) matrices $\hat A_i$, $\hat B_i$ with
such property. This justifies the use of Algorithm~\ref{alg:main} for
control design, since it gives a kind of robustness result for the
feedback matrices computed by the algorithm. Theorem~\ref{thm:KiCQLF}
also establishes that suitable feedback matrices will be computed for
all positive $\epsilon_c$ and $\epsilon_d$ respectively less than
$\epsilon_c^\star$ and $\epsilon_d^\star$. The latter quantities may
be not known, since they depend on the possibly unknown $\hat A_i$ and
$\hat B_i$. Consequently, in theory the parameters $\epsilon_c$ and
$\epsilon_d$ should be selected as small as computationally
possible. In practice, however, there is a tradeoff in the selection
of $\epsilon_c$ since the smaller $\epsilon_c$, the higher the chances
of Algorithm~\ref{alg:main} yielding unsuitable feedback matrices when
more than one common eigenvector could be assigned by feedback with
one of them having unstable corresponding eigenvalues and another
having stable ones (Procedure~$\cea$ could select a local optimiser at
the boundary of the constraint set instead of a global optimiser in
its interior).

  After executing Algorithm~\ref{alg:main} to compute feedback
  matrices $K_i$, we can check whether the closed-loop DTSS
  (\ref{eq:26})--(\ref{eq:49}) admits a CQLF by solving the following
  LMIs
  \begin{equation}
    \label{eq:10}
    P=P^T > 0,\quad P - (A_i^\CL)^T P A_i^\CL > 0, \text{ for }i\in\Ind.
  \end{equation}
  Note that, in this case, LMIs are used only to check whether a CQLF
  for the closed-loop system exists, and not for feedback design. If
  these LMIs are feasible, then not only the closed-loop DTSS admits a
  CQLF but also we have structural information on the DTSS since $K_i$
  are such that the $A_i^\CL$ are suitably close to being
  simultaneously triangularisable.



\section{Examples}
\label{sec:examples}

We next provide some numerical examples to illustrate the advantages
and limitations of the proposed feedback design strategy. For the
numerical implementation of Procedure~$\cea$, a feasible vector is
first sought using \textsc{Matlab}$^{\tiny\textregistered}$
\textsc{Optimization Toolbox} function \texttt{fgoalattain}. If such
vector is found, it is passed as initial point to the optimisation,
implemented via the function \texttt{fmincon}.

\subsection{Randomly created DTSS}
\label{sec:rand-creat-dtss}


The following subsystems were created randomly but such that
$\srad(A_1)<1$ and $\srad(A_2^{-1})<1$.
\begin{alignat}{2}
  \label{eq:11}
  A_1 &= \left[
    \begin{smallmatrix}
      0.574  & 0.074  & 0.089\\
      0.074  & 0.572  &-0.091\\
      0.089  &-0.091  & 0.538
    \end{smallmatrix}\right],&\quad
  B_1 &= \left[
    \begin{smallmatrix}
      -0.038\\ 0.327\\ 0.175
    \end{smallmatrix}\right],\\
  \label{eq:20}
  A_2 &= \left[
    \begin{smallmatrix}
      -0.737 & 0.386  &-1.680\\
       1.351 & 0.638  & 0.035\\
       1.071 &-1.295  &-0.936
    \end{smallmatrix}\right],&\quad
  B_2 &= \left[
    \begin{smallmatrix}
      0\\ 0.114\\ 1.067
    \end{smallmatrix}\right].
\end{alignat}
Executing Algorithm~\ref{alg:main} choosing $\epsilon_c = \epsilon_d =
10^{-4}$ yields a feasible optimisation at every iteration and returns
\begin{gather}
  \label{eq:12}
  \begin{smallmatrix}
    K_1 &= \left[
      \begin{smallmatrix}
        -3.6480 &-7.2304 & 8.7751
      \end{smallmatrix}\right],\\
    K_2 &= \left[
      \begin{smallmatrix}
        -0.3159 & 2.0235 & 0.2695
      \end{smallmatrix}\right],
  \end{smallmatrix}\;
  \small U=\left[
    \begin{smallmatrix}
       0.4647 &0.8287 &-0.3120\\
      -0.7770 &0.2126 &-0.5925\\
      -0.4246 &0.5178 & 0.7427
    \end{smallmatrix}\right].
\end{gather}
It can be shown that LMIs (\ref{eq:10}) are feasible and hence the
closed-loop DTSS (\ref{eq:26})--(\ref{eq:49}) with $K_i$ as in
(\ref{eq:12}) admits a CQLF and is hence stable under arbitrary
switching. In this case, it can be checked that $U^* A_i^\CL U$ are
not upper triangular but are close in the sense that the entries below
the main diagonal are small. Therefore, the use of feedback matrices
(\ref{eq:12}) designed via Algorithm~\ref{alg:main} provides some
insight into the structure of the closed-loop DTSS.

On the other hand, solution of the LMIs (\ref{eq:13}) for both
feedback design and CQLF computation, which yields
\begin{equation*}
    K_1 = \left[
      \begin{smallmatrix}
        -1.2267 &-0.7211 &-1.8731
      \end{smallmatrix}\right],\quad
    K_2 = \left[
      \begin{smallmatrix}
        -0.5140 &1.3826  &1.1613
      \end{smallmatrix}\right],
\end{equation*}
is guaranteed to produce a closed-loop DTSS stable under arbitrary
switching but provides no structural information.

\subsection{DTSS with no CQLF}
\label{sec:non-stab-dtss}

Consider the systems
\begin{equation*}
  A_1 = \left[
    \begin{smallmatrix}
      0.5 &\alpha\\
      0   &0.5
    \end{smallmatrix}\right],
  B_1 = \left[
    \begin{smallmatrix}
      0\\ 1
    \end{smallmatrix}\right],\qquad
  A_2 = \left[
    \begin{smallmatrix}
      0.5    &0\\
      \alpha &0.5
    \end{smallmatrix}\right],
  B_2 = \left[
    \begin{smallmatrix}
      1\\ 0
    \end{smallmatrix}\right].
\end{equation*}
For $\alpha=1.5$, the LMIs (\ref{eq:13}) are not feasible. Therefore,
no CQLF exists for this DTSS. Executing Algorithm~\ref{alg:main} with
$\epsilon_c = \epsilon_d = 10^{-4}$ yields an infeasible optimisation
at the first iteration, $\Sb(\epsilon_c,\epsilon_d)=\emptyset$.

For $\alpha=1.4999$, the LMIs (\ref{eq:13}) are feasible. In
this case, Algorithm~\ref{alg:main} for the selected $\epsilon_c$,
$\epsilon_d$ yields an infeasible optimisation. However, reducing
$\epsilon_c$ to $10^{-5}$ allows the algorithm again to compute
suitable feedback matrices.

\subsection{CQLF exists but Algorithm~\ref{alg:main} fails}
\label{sec:alg}


Consider again the DTSS (\ref{eq:11})--(\ref{eq:20}), with the
addition of the subsystem
\begin{equation*}
  A_3 = \left[
    \begin{smallmatrix}
       0.352  &0.159 &-1.129\\
       0.159  &0     & 0.262\\
      -1.129  &0.262 &-0.705
    \end{smallmatrix}\right],\quad
  B_3 = \left[
    \begin{smallmatrix}
      -0.433\\ 0\\ 0
    \end{smallmatrix}\right].
\end{equation*}
Algorithm~\ref{alg:main} for $\epsilon_c = \epsilon_d = 10^{-4}$ yields 
\begin{gather}
  \label{eq:22}
  \begin{tabular}[c]{l}
    $K_1 = \left[
      \begin{smallmatrix}
        -15.3542 &3.8969 &-11.3814
      \end{smallmatrix}\right],$\\
    $K_2 = \left[
      \begin{smallmatrix}
        0.0734   &0.9747 &2.7288
      \end{smallmatrix}\right],$\\
    $K_3 = \left[
      \begin{smallmatrix}
        -1.3542  &0.8334 &-4.5001
      \end{smallmatrix}\right],$
  \end{tabular}\\
  \notag
  U = \left[
    \begin{smallmatrix}
      0.1662 + 0.0234j &-0.0918 - 0.6508j &-0.7347\\
      0.9744 + 0.1374j & 0.0093 + 0.0662j & 0.1650\\
      0.0587 + 0.0083j & 0.1049 + 0.7433j &-0.6580
    \end{smallmatrix}\right].
\end{gather}
However, the optimisation informs that there is an active inequality,
corresponding to the stability constraint (\ref{eq:61}). In this case,
the LMIs (\ref{eq:10}) are not feasible and hence no CQLF exists when
the feedback matrices (\ref{eq:22}) are employed.

On the other hand, the LMIs (\ref{eq:13}) are feasible and hence other
feedback matrices may indeed produce a closed-loop DTSS with a CQLF.

\section{Conclusions}
\label{sec:conclusions}

This paper complements the theoretical results in \cite{HBF2009:_CDC},
and contributes to furthering the understanding of fundamental system
structure in feedback stabilisation of DTSS. We have presented a
numerical implementation of a feedback design strategy for DTSS based
on Lie-algebraic solvability. The proposed strategy seeks feedback
matrices to achieve a closed-loop system structure that
\emph{approximates} that required to satisfy such Lie-algebraic
stability criteria.

The main theoretical contribution of the paper establishes that if a
system for which the Lie-algebraic conditions considered exists in a
suitably small neighbourhood of the given system data, then our
implementation will find feedback matrices so that the corresponding
closed-loop DTSS admits a CQLF \emph{even if the considered
  Lie-algebraic conditions are not met by the given system}. However,
since the existence of such feasible ``exact'' system is in general
unknown, the resulting closed-loop system is not guaranteed to admit a
CQLF but the latter may be checked with a set of \emph{informed} LMIs
built with the computed closed-loop matrices. Whether such ``exact''
system exists suitably close to a given system has not been discussed,
and remains a topic for further research. Future work will also
consider extensions to multiple input systems, which are nontrivial
and will possibly require the consideration of controllability indices
in the algorithm.


\appendix

\section{Appendix}
\label{sec:my-appendix}

\subsection{Proof of Theorem~\ref{thm:num1}}
\label{sec:proof-theorem}

Throughout this proof, a hatted expression denotes the expression
given by the corresponding equations when matrices $\hat A_i^\ell$,
$\hat B_i^\ell$ are substituted for $A_i^\ell$, $B_i^\ell$. For
example, $\hat\H_i(v) = (vv^* - \id)\hat B_i^\ell$.

We begin by establishing that $\epsilon_d^{\ell,\star} > 0$.
The set $\hat\Tb^\ell_{\gamma}$ is the set of all unit vectors
that are $\gamma$-compatible with $\hat\Z^\ell$, and since
$\hat\Z^\ell$ is $\gamma$-CEAS, then $\hat\Tb^\ell_{\gamma} \neq
\emptyset$. Since $\hat\Z^\ell$ is controllable, then $(\hat
A_i^\ell,\hat B_i^\ell)$ is controllable and hence $\hat B_i^\ell \neq
0$.
\begin{claim}
  \label{clm:Tclosed}
  The set $\hat\Tb^\ell_{\gamma}$ is compact.
\end{claim}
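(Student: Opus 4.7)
The plan is to exploit that $\hat\Tb^\ell_\gamma \subseteq \Sb_1$, so since the unit sphere is compact in $\C^{n_r}$ it suffices to establish closedness. Because a finite intersection of closed sets is closed, I would reduce the task to showing that each $\hat\Tb^\ell_{i,\gamma}$ is closed.

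First I would rewrite the defining condition in a way that eliminates the row vector $F$. Since $v\in\Sb_1$ is nonzero, given any scalar $u\in\C$ there exists $F\in\C^{1\times n_r}$ with $Fv=u$ (for instance $F = u\,v^*$). Hence $v\in\hat\Tb^\ell_{i,\gamma}$ if and only if $v\in\Sb_1$ and there exist $\lambda,u\in\C$ with $|\lambda|\le 1-\gamma$ satisfying
\begin{equation*}
  \hat A_i^\ell v = \lambda v - \hat B_i^\ell u.
\end{equation*}
This reduces the two witnesses to a pair of scalars in $\C$.

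Next I would run a standard sequential-closedness argument. Take $v_k\in\hat\Tb^\ell_{i,\gamma}$ with $v_k\to v$; extract corresponding $\lambda_k,u_k$ satisfying $\hat A_i^\ell v_k = \lambda_k v_k - \hat B_i^\ell u_k$ and $|\lambda_k|\le 1-\gamma$. The $\lambda_k$ are bounded by hypothesis, so pass to a subsequence with $\lambda_k\to\lambda$ and $|\lambda|\le 1-\gamma$. For the $u_k$, rearrange to $\hat B_i^\ell u_k = \lambda_k v_k - \hat A_i^\ell v_k$; the right-hand side is bounded (since $\lambda_k$ and $v_k$ are), and controllability of $(\hat A_i^\ell,\hat B_i^\ell)$ guarantees $\hat B_i^\ell\neq 0$, so $|u_k|\le\|\lambda_k v_k-\hat A_i^\ell v_k\|/\|\hat B_i^\ell\|$ is bounded as well. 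Extract a further subsequence with $u_k\to u$ and pass to the limit to obtain $\hat A_i^\ell v = \lambda v - \hat B_i^\ell u$ with $|\lambda|\le 1-\gamma$ and $v\in\Sb_1$ (the sphere being closed), so $v\in\hat\Tb^\ell_{i,\gamma}$.

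The main obstacle is the existential quantifier over the row vector $F$, which a priori has no uniform bound along $v_k$. The first reformulation above converts it into a scalar $u=Fv$, and then the key step is controlling $|u_k|$, which relies crucially on $\hat B_i^\ell\neq 0$ — this is exactly where the controllability assumption on $\hat\Z^\ell$ is used. Once each $\hat\Tb^\ell_{i,\gamma}$ is closed, compactness of $\hat\Tb^\ell_\gamma=\bigcap_{i=1}^{\s}\hat\Tb^\ell_{i,\gamma}$ as a closed subset of the compact sphere $\Sb_1$ follows immediately.
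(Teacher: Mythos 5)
Your proof is correct, and its skeleton (boundedness from $\hat\Tb^\ell_\gamma\subseteq\Sb_1$, sequential closedness of each $\hat\Tb^\ell_{i,\gamma}$, finite intersection) matches the paper's. The one genuinely different move is how you discharge the existential quantifier over $F$. The paper left-multiplies by the orthogonal projector $\hat P_i^\ell = \id - \hat B_i^\ell\bigl((\hat B_i^\ell)^*\hat B_i^\ell\bigr)^{-1}(\hat B_i^\ell)^*$ onto $(\img\hat B_i^\ell)^\perp$, which annihilates the feedback term entirely and recasts membership as $\hat P_i^\ell(\lambda\id-\hat A_i^\ell)v=0$; the only remaining witness is the bounded scalar $\lambda$, and the limit argument splits on whether $\hat P_i^\ell v=0$. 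You instead keep the input witness as the scalar $u=Fv$ and bound $|u_k|$ via $\|\hat B_i^\ell\|^{-1}\|\lambda_k v_k-\hat A_i^\ell v_k\|$, which is arguably more elementary and makes explicit exactly where $\hat B_i^\ell\neq 0$ (hence controllability) enters. What the paper's projection formulation buys, beyond this claim, is reuse: the identity $\hat P_i^\ell\hat B_i^\ell=0$ is invoked again later (in the proof of Claim~\ref{clm:quilombo} and Claim~\ref{clm:usc}) to show that $\hat\Tb^\ell_\gamma$ is independent of the feedback matrices chosen at earlier iterations, which your scalar-witness formulation would not deliver as directly. One small caveat if you ever extend this to $m>1$: your bound on the witness would then need $\sigma_{\min}(\hat B_i^\ell)>0$, i.e.\ full column rank, rather than just $\hat B_i^\ell\neq 0$.
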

\begin{proof}
  The set $\hat\Tb^\ell_{i,\gamma}$ is bounded since
  $\hat\Tb^\ell_{i,\gamma}\subset\Sb_1$ by definition. The set
  $\hat\Tb^\ell_{i,\gamma}$ can be equivalently defined as
  \begin{equation}
    \label{eq:Talt}
    \hat\Tb^\ell_{i,\gamma} = \{v \in \Sb_1 : 
    \exists \lambda\in\C \text{ such  that } 
    \hat P_i^\ell(\lambda \id - \hat A_i^\ell) v = 0, \text{ with }|\lambda| \le 1-\gamma \},
  \end{equation}
  where we have defined
  \begin{equation}
    \label{eq:P}
    \hat P_i^\ell = \left[\id - \hat B_i^\ell\left((\hat B_i^\ell)^* \hat B_i^\ell\right)^{-1}(\hat B_i^\ell)^*\right].
  \end{equation}
  Note that since $0\neq \hat B_i^\ell \in \C^{n_r \times 1}$, then
  $\hat P_i^\ell$ is well-defined. Consider a sequence
  $\{v_k\}_{k=0}^\infty$ such that $v_k \in \hat\Tb_{i,\gamma}^\ell$
  for all $k\ge 0$ and $\lim_{k\rightarrow\infty} v_k = v$. Since
  $\|v_k\| = 1$ for all $k\ge 0$ and by the continuity of norms, then
  $\|v\|=1$ necessarily. For every $k\ge 0$, we have
  \begin{equation}
    \label{eq:24}
    \hat P_i^\ell(\lambda_k \id - \hat A_i^\ell) v_k = 0,
  \end{equation}
  for some $\lambda_k \in \C$ with $|\lambda_k|\le 1-\gamma$. From (\ref{eq:24}),
  \begin{equation}
    \lim_{k\rightarrow\infty} \hat P_i^\ell v_k\lambda_k =
    \lim_{k\rightarrow\infty} \left[\hat P_i^\ell (v_k-v)\lambda_k +
    \hat P_i^\ell v\lambda_k\right] = \hat P_i^\ell \hat A_i^\ell v.
  \end{equation}
  Since $\lambda_k$ is bounded and $\lim_{k\rightarrow\infty} v_k=v$,
  it follows that
  \begin{equation}
    \lim_{k\rightarrow\infty} \hat P_i^\ell v\lambda_k = \hat P_i^\ell
    \hat A_i^\ell v.
  \end{equation}
  If $\hat P_i^\ell v\neq 0$, then $\lim_{k\rightarrow\infty} \lambda_k = \lambda$
  with $|\lambda|\le 1-\gamma$. If $\hat P_i^\ell v = 0$, then $\hat
  P_i^\ell \hat A_i^\ell v=0$. In either case,
  $v\in\hat\Tb_{i,\gamma}^\ell$ and hence $\hat\Tb_{i,\gamma}^\ell$ is
  closed. Therefore, $\hat\Tb_{\gamma}^\ell$ is closed since it is the
  intersection of a finite number of closed sets.
\end{proof}

The quantity $\epsilon_d^{\ell,\star}$ as defined in (\ref{eq:48}) is
the infimum, over all vectors $\gamma$-compatible with $\hat\Z^\ell$,
of the minimum of the distance between such vectors and $\img \hat
B_i^\ell$. Since $(\hat A_i^\ell,\hat B_i^\ell)$ is controllable and
$n_r>1$, then Lemma~\ref{lem:SIgood} implies that $v\notin\img\hat
B_i^\ell$ for $i\in\Ind$ and every $v \in
\hat\Tb_{\gamma}^\ell$. Therefore, for every $v \in
\hat\Tb_{\gamma}^\ell$, we have $\min_{i\in\Ind} \dist(v,\img \hat
B_i^\ell) > 0$ because $\img \hat B_i^\ell$ is closed for
$i\in\Ind$. Since $\hat\Tb_{\gamma}^\ell$ is compact and
$\min_{i\in\Ind} \dist(v,\img \hat B_i^\ell)$ is continuous and
positive at every $v\in\hat\Tb_{\gamma}^\ell$, it follows that
$\min_{i\in\Ind} \dist(v,\img \hat B_i^\ell)$ achieves a minimum on
$\hat\Tb_{\gamma}^\ell$ and hence $\epsilon_d^{\ell,\star} > 0$.

\subsubsection{Proof of Theorem~\ref{thm:num1}~\ref{item:5})}

\begin{claim}
  \label{clm:eps}
  For every $0<\epsilon_c^\ell<\gamma$ and
  $0<\epsilon_d^\ell<\epsilon_d^{\ell,\star}$, there exists
  $\delta_2>0$ such that $\hat\Tb_{\gamma}^\ell \cap
  \Sb(\epsilon_c^{\ell}, \epsilon_d^{\ell}) \neq \emptyset$ for all
  $A_i^\ell$, $B_i^\ell$ satisfying
  \begin{equation}
    \label{eq:70}
        \| \hat A_i^\ell - A_i^\ell \| < \delta_2,\quad
        \| \hat B_i^\ell - B_i^\ell \| < \delta_2.
  \end{equation}
\end{claim}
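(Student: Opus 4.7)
The plan is to fix a single vector $\hat v \in \hat\Tb_{\gamma}^\ell$ (this set is nonempty because $\hat\Z^\ell$ is $\gamma$-CEAS) and show that, provided $\delta_2$ is small enough, this same $\hat v$ lies in $\Sb(\epsilon_c^\ell,\epsilon_d^\ell)$ for every $A_i^\ell,B_i^\ell$ satisfying (\ref{eq:70}). Since $\hat v$ already belongs to $\hat\Tb_{\gamma}^\ell\subset \Sb_1$, the containment $\hat v \in \Sb_1$ is automatic, and it remains to verify the conditions defining $\Sb_2(\epsilon_c^\ell)$ and $\Sb_3(\epsilon_d^\ell)$.

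For the $\Sb_3$ condition, recall that by the first part of the proof $\epsilon_d^{\ell,\star}>0$ and by definition of the infimum in (\ref{eq:48}) we have $\min_{i\in\Ind}\dist(\hat v,\img \hat B_i^\ell) \ge \epsilon_d^{\ell,\star}$. Because $\hat B_i^\ell \neq 0$, for $B_i^\ell$ sufficiently close to $\hat B_i^\ell$ the orthogonal projector onto $\img B_i^\ell$ depends continuously on $B_i^\ell$, hence so does $\dist(\hat v,\img B_i^\ell)$. Since $\epsilon_d^\ell<\epsilon_d^{\ell,\star}$, there is $\delta_3>0$ such that $\|\hat B_i^\ell - B_i^\ell\|<\delta_3$ forces $\dist(\hat v,\img B_i^\ell)\ge \epsilon_d^\ell$, giving $\hat v\in\Sb_3(\epsilon_d^\ell)$.

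For the $\Sb_2$ condition, observe that $\hat v\in\hat\Tb_\gamma^\ell$ means there exist $\hat F_i$, $\hat\lambda_i$ with $(\hat A_i^\ell+\hat B_i^\ell\hat F_i)\hat v = \hat\lambda_i\hat v$ and $|\hat\lambda_i|\le 1-\gamma$. Since $\hat v\notin\img\hat B_i^\ell$, the matrix $\hat\H_i(\hat v)$ has full column rank and Lemma~\ref{lem:comeig}~\ref{item:3}) gives $\hat A_i^{\ell,\CL}(\hat v)\hat v = \hat\lambda_i\hat v$, so $\|\hat A_i^{\ell,\CL}(\hat v)\hat v\|\le 1-\gamma$. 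Once $\delta_2\le\delta_3$ (so that $\H_i(\hat v)$ has full column rank), the maps $(A_i^\ell,B_i^\ell)\mapsto \E_i(\hat v), \H_i(\hat v), M_i(\hat v), A_i^{\ell,\CL}(\hat v)$ defined by (\ref{eq:17})--(\ref{eq:30}) are continuous in a neighbourhood of $(\hat A_i^\ell,\hat B_i^\ell)$, as they involve only sums, products, and the inverse of the nonsingular Gram matrix $\H_i(\hat v)^*\H_i(\hat v)$. Because $\epsilon_c^\ell<\gamma$, continuity yields a $\delta_4>0$ such that $\|A_i^{\ell,\CL}(\hat v)\hat v\|\le 1-\epsilon_c^\ell$ whenever the perturbation is below $\delta_4$, i.e.\ $\hat v\in\Sb_2(\epsilon_c^\ell)$.

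Taking $\delta_2 \dfn \min\{\delta_3,\delta_4\}$ then ensures $\hat v\in\Sb(\epsilon_c^\ell,\epsilon_d^\ell)$ and therefore $\hat v\in \hat\Tb_\gamma^\ell\cap \Sb(\epsilon_c^\ell,\epsilon_d^\ell)\neq\emptyset$. The only delicate point is the need to establish the continuity of $M_i(\hat v)$ past a potential loss of full column rank of $\H_i(\hat v)$; this is handled by imposing the $\Sb_3$-type margin $\delta_2\le\delta_3$ \emph{before} invoking the continuity argument used in the $\Sb_2$ step, which is why $\epsilon_d^{\ell,\star}>0$ (the first, compactness-based part of the proof) is essential.
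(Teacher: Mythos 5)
Your proposal is correct and follows essentially the same route as the paper: fix one unit vector $\gamma$-compatible with $\hat\Z^\ell$, use $\epsilon_d^{\ell,\star}>0$ to secure full column rank of $\hat\H_i$ at that vector (hence continuity of $M_i$ and $A_i^{\ell,\CL}$ under perturbation), and then exploit the strict margins $\epsilon_c^\ell<\gamma$ and $\epsilon_d^\ell<\epsilon_d^{\ell,\star}$ to keep the vector in $\Sb_2$ and $\Sb_3$ for all sufficiently small perturbations. The only differences are cosmetic (order of the $\Sb_2$ and $\Sb_3$ steps).
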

\begin{proof}
Let $\vv$ and $\hat G_i$ be $\gamma$-compatible with $\hat\Z^\ell$ and
$\|\vv\|=1$. Note that $\vv\in\hat\Tb_\gamma^\ell$.
  By (\ref{eq:48}), $\hat G_i$ and $\vv$ satisfy, for $i\in\Ind$,
  \begin{align}
    \label{eq:40}
    &(\hat A_i^\ell + \hat B_i^\ell \hat G_i)\vv = \lambda_i \vv,\\
    \label{eq:59}
    &|\lambda_i| \le 1-\gamma < 1,\\
    \label{eq:58}
    &\dist(\vv,\img \hat B_i^\ell) \ge \epsilon_d^{\ell,\star} > 0.
  \end{align}
  Consider $\hat{\H}_i(\vv)$ and $\H_i(\vv)$ from (\ref{eq:18}). By
  (\ref{eq:58}) and since $\hat B_i^\ell$ have full rank, then
  $\hat{\H}_i(\vv)$ has full rank. Then, for all $\delta_0 > 0$
  sufficiently small, ${\H}_i(\vv)$ has full rank whenever $\|\hat
  B_i^\ell - B_i^\ell \| < \delta_0$. Whenever the latter holds, the
  expression (\ref{eq:30}) is continuous on the entries of $A_i^\ell$
  and $B_i^\ell$. Then, given $\epsilon_0 > 0$, we can find $0 <
  \delta_1 \le \delta_0$ such that
  \begin{gather}
    \notag
    \| \hat A_i^{\ell,\CL}(\vv) \vv - A_i^{\ell,\CL}(\vv) \vv \| < \epsilon_0\\
    \label{eq:68}
    \text{whenever}\quad \left\lbrace
      \begin{matrix}
        \| \hat A_i^\ell - A_i^\ell \| < \delta_{1},\\
        \| \hat B_i^\ell - B_i^\ell \| < \delta_{1}.
      \end{matrix} \right.
  \end{gather}
  By (\ref{eq:40}) and Lemma~\ref{lem:comeig}-\ref{item:3}), we have
  $\hat A_i^{\ell,\CL}(\vv) \vv = \lambda_i \vv$. Since $\|\vv\|=1$,
  then
  \begin{equation*}
    \| \hat A_i^{\ell,\CL}(\vv) \vv \| = |\lambda_i| \le
    1-\gamma < 1
  \end{equation*}
  by (\ref{eq:59}), and we can select $\epsilon_0>0$ small enough so
  that $\| A_i^{\ell,\CL}(\vv) \vv \| \le 1 - \epsilon_c^{\ell}$ and
  hence $\vv\in\Sb_2(\epsilon_c^{\ell})$ whenever (\ref{eq:68})
  holds. 
  For $0 \le a < \delta_1$, define
  \begin{equation*}
    \md(a) \dfn \min_{i\in\Ind}\: \inf_{B_i^\ell:\|\hat B_i^\ell - B_i^\ell \|
      \le a} \dist(\vv,\img B_i^\ell)
  \end{equation*}
  and note that, by the continuity of $\dist(\vv,\img B_i^\ell)$ on
  the entries of $B_i^\ell$ whenever $B_i^\ell$ has full rank, and
  since by (\ref{eq:58}) $\md(0) \ge \epsilon_d^{\ell,\star} > 0$,
  there exists $\delta_2 > 0$ sufficiently small for which
  $\md(\delta_2) > \epsilon_d^\ell > 0$. Therefore, for such
  $\delta_2$ we have $\vv\in\Sb_3(\epsilon_d^{\ell})$ and hence
  $\vv\in\Sb(\epsilon_c^\ell,\epsilon_d^\ell)$ whenever (\ref{eq:70})
  holds.
\end{proof}

\begin{claim}
  \label{clm:uniform}
  Consider $0 < \epsilon_c^\ell < \gamma$ and $0 <
  \epsilon_d^\ell < \epsilon_d^{\ell,\star}$. Then, there exists
  $\delta_3>0$ so that
  \begin{enumerate}[i)]
  \item (\ref{eq:47})--(\ref{eq:43}) hold for every $A_i^\ell$,
    $B_i^\ell$ satisfying (\ref{eq:46}).
    \begin{gather}
      \label{eq:47}
      \sup_{v\in\Sb(\epsilon_c^\ell,\epsilon_d^\ell)} \max_{i\in\Ind} \| \hat
      A_i^{\ell,\CL}(v) v \| \le 1 - \epsilon_c^\ell/2,\\
      \label{eq:43}
      \inf_{v\in\Sb(\epsilon_c^\ell,\epsilon_d^\ell)} \min_{i\in\Ind} \dist(v,\img
      \hat B_i^\ell) \ge \epsilon_d^\ell/2,\\
      \label{eq:46}
      \| \hat A_i^\ell - A_i^\ell \| < \delta_3,\quad
      \| \hat B_i^\ell - B_i^\ell \| < \delta_3.
    \end{gather}\label{item:12}
  \item For every $\epsilon_2>0$, there exists $0<\delta_4<\delta_3$
    such that (\ref{eq:51}) holds for every $A_i^\ell$,
    $B_i^\ell$ satisfying (\ref{eq:71}).
    \begin{gather}
      \label{eq:51}
      | \hat J(v) - J(v) | < \epsilon_2,
      \quad\text{for all } v\in\Sb(\epsilon_c^\ell,\epsilon_d^\ell)\\
      \label{eq:71}
      \| \hat A_i^\ell - A_i^\ell \| < \delta_4,\quad
      \| \hat B_i^\ell - B_i^\ell \| < \delta_4.
    \end{gather}\label{item:13}
  \end{enumerate}
\end{claim}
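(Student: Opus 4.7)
The plan is to exploit the compactness of $\Sb(\epsilon_c^\ell, \epsilon_d^\ell)$ as a closed subset of the unit sphere, together with the joint continuity of the relevant maps in $(v, A_i^\ell, B_i^\ell)$ on the open region where $\H_i(v)$ has full column rank. First observe that by the very definition of $\Sb(\epsilon_c^\ell,\epsilon_d^\ell)$, every $v$ in this set satisfies $\dist(v,\img B_i^\ell)\ge \epsilon_d^\ell>0$ for all $i\in\Ind$, which implies that $\H_i(v)$ has full column rank there, and, by compactness, that $\|(\H_i(v)^*\H_i(v))^{-1}\|$ is uniformly bounded above on $\Sb(\epsilon_c^\ell,\epsilon_d^\ell)$. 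Hence for all sufficiently small perturbations of $B_i^\ell$ the matrix $\hat{\H}_i(v)$ also has full column rank uniformly on this set, so that the hatted objects $\hat M_i(v)$, $\hat A_i^{\ell,\CL}(v)$ and $\hat J(v)$ are well defined there.

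For part~\ref{item:12}), the map $v\mapsto A_i^{\ell,\CL}(v)v$ defined by (\ref{eq:30}) and (\ref{eq:35}) is jointly continuous in $(v,A_i^\ell,B_i^\ell)$ on the neighbourhood just identified, so the standard compactness argument (a continuous function on the compact slice $\{(\hat A_i^\ell,\hat B_i^\ell)\}\times\Sb(\epsilon_c^\ell,\epsilon_d^\ell)$ extends to a uniform-continuity estimate on a neighbourhood of that slice) yields $\delta_3>0$ such that
\begin{equation*}
\| \hat A_i^{\ell,\CL}(v) v - A_i^{\ell,\CL}(v) v \| < \epsilon_c^\ell/2
\end{equation*}
uniformly on $\Sb(\epsilon_c^\ell,\epsilon_d^\ell)$ whenever (\ref{eq:46}) holds. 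Combined with the stability constraint $\|A_i^{\ell,\CL}(v) v\|\le 1-\epsilon_c^\ell$ encoded in $\Sb_2(\epsilon_c^\ell)$, the triangle inequality yields (\ref{eq:47}). For the distance bound (\ref{eq:43}), $\hat B_i^\ell\neq 0$ since $(\hat A_i^\ell,\hat B_i^\ell)$ is controllable, and for any unit vector $v$ the distance $\dist(v,\img B)$ depends continuously on the nonzero column $B$ uniformly in $v\in\Sb_1$ (since $\dist(v,\img B)^2=1-|v^*B|^2/\|B\|^2$); shrinking $\delta_3$ if needed ensures that the hatted distance is within $\epsilon_d^\ell/2$ of the unhatted one on $\Sb(\epsilon_c^\ell,\epsilon_d^\ell)$, and (\ref{eq:43}) follows from the constraint built into $\Sb_3(\epsilon_d^\ell)$.

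For part~\ref{item:13}), the cost $J(v)$ in (\ref{eq:36}) is a rational function of the entries of $A_i^\ell$, $B_i^\ell$ and $v$ whose only denominators are the scalars $\H_i(v)^*\H_i(v)$, which by the argument above are uniformly bounded away from zero on $\Sb(\epsilon_c^\ell,\epsilon_d^\ell)$ (and remain so for the hatted version by the choice of $\delta_3$). Hence the map $(A_i^\ell,B_i^\ell,v)\mapsto J(v)$ is jointly continuous and, by the same compactness argument, uniformly continuous in the matrix entries for $v$ ranging over $\Sb(\epsilon_c^\ell,\epsilon_d^\ell)$. A further shrinking to some $\delta_4\in(0,\delta_3)$ then delivers (\ref{eq:51}).

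The main obstacle is the bookkeeping required to guarantee that the hatted objects $\hat M_i(v)$, $\hat A_i^{\ell,\CL}(v)$ and $\hat J(v)$ remain well defined on the constraint set $\Sb(\epsilon_c^\ell,\epsilon_d^\ell)$, which is built from the unperturbed data $A_i^\ell,B_i^\ell$ rather than the hatted data; this is precisely what the positive lower bound $\epsilon_d^\ell$ on $\dist(v,\img B_i^\ell)$ carried in the definition of $\Sb_3(\epsilon_d^\ell)$ is there to enable, by providing a uniform (in $v$) lower bound on the smallest singular value of $\H_i(v)$ that survives small perturbations of $B_i^\ell$.
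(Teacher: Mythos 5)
Your proof is correct and follows essentially the same route as the paper's: both reduce (\ref{eq:47}), (\ref{eq:43}) and (\ref{eq:51}) to uniform (over the compact constraint set) continuity of the hatted-versus-unhatted differences in the matrix data, with the lower bound $\epsilon_d^\ell$ built into $\Sb_3(\epsilon_d^\ell)$ guaranteeing that $\H_i(v)$ and $\hat\H_i(v)$ remain full column rank so all quantities are well defined. The paper merely packages the same uniform estimates as continuity of sup-functions $f$, $g$, $h$ that vanish at the hatted data.
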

\begin{proof}
  \ref{item:12}) Consider the functions
  \begin{gather*}
    f\big(\{A_i^\ell, B_i^\ell : i\in\Ind\}\big) \dfn
    \sup_{v\in\Sb(\epsilon_c^\ell,\epsilon_d^\ell)} \max_{i\in\Ind} \| \hat
    A_i^{\ell,\CL}(v) v - A_i^{\ell,\CL}(v) v \|, \\
    g\big(\{A_i^\ell, B_i^\ell : i\in\Ind\}\big) \dfn 
    \sup_{v\in\Sb(\epsilon_c^\ell,\epsilon_d^\ell)} \max_{i\in\Ind}
    |\dist(v,\img \hat B_i^\ell) - \dist(v,\img B_i^\ell)|,
  \end{gather*}
  which are non-negative whenever well-defined. Note that
  \begin{equation*}
    f\big(\{\hat A_i^\ell, \hat B_i^\ell : i\in\Ind\}\big) = 0 =
    g\big(\{\hat A_i^\ell, \hat B_i^\ell : i\in\Ind\}\big),
  \end{equation*}
  and that, since $\hat B_i^\ell$ full column rank, $f,g$ are
  continuous on the entries of $A_i^\ell$, $B_i^\ell$ and well-defined
  whenever (\ref{eq:46}) holds for some $\delta_3$ sufficiently
  small. Therefore, for every $\epsilon_1 > 0$, we can find a
  corresponding $\delta_3 > 0$ so that
  \begin{equation*}
    f\big(\{A_i^\ell, B_i^\ell : i\in\Ind\}\big) \le
    \epsilon_1,\quad
    g\big(\{A_i^\ell, B_i^\ell : i\in\Ind\}\big) \le \epsilon_1
  \end{equation*}
  whenever (\ref{eq:46}) holds. Select $\epsilon_1 \le
  \min\{\epsilon_c^\ell/2, \epsilon_d^\ell/2\}$, and take the corresponding
  $\delta_3$. Thus, (\ref{eq:47})--(\ref{eq:43}) hold whenever
  (\ref{eq:46}) is true.

  \ref{item:13}) Consider the function
  \begin{equation*}
    h\big(\{A_i^\ell, B_i^\ell : i\in\Ind\}\big) \dfn
    \sup_{v\in\Sb(\epsilon_c^\ell,\epsilon_d^\ell)} | \hat J(v) - J(v) |.
  \end{equation*}
  Note that, because of (\ref{eq:43}), $h$ is well-defined,
  non-negative and continuous whenever (\ref{eq:46}) holds and that
  \begin{equation*}
    h\big(\{\hat A_i^\ell, \hat B_i^\ell : i\in\Ind\}\big) = 0.
  \end{equation*}
  Therefore, given $\epsilon_2 > 0$ we can find $0 < \delta_4 <
  \delta_3$, such that (\ref{eq:51}) holds whenever (\ref{eq:71}) is
  true.
\end{proof}


Fix $\epsilon_c^\ell$ and $\epsilon_d^\ell$ so that $0 <
\epsilon_c^\ell < \gamma$ and $0 < \epsilon_d^\ell <
\epsilon_d^{\ell,\star}$. Consider $A_i^\ell$ and $B_i^\ell$
satisfying both (\ref{eq:70}) with $\delta_2$ as given by
Claim~\ref{clm:eps} and (\ref{eq:46}) with $\delta_3$ from
Claim~\ref{clm:uniform}-\ref{item:12}). Let
$v_1^\ell\in\Sb(\epsilon_c^\ell,\epsilon_d^\ell)$ be a (global)
minimiser of $J(v)$ subject to
$v\in\Sb(\epsilon_c^\ell,\epsilon_d^\ell)$ [such minimiser exists
because $J$ is continuous on $\Sb(\epsilon_c^\ell,\epsilon_d^\ell)$
and $\Sb(\epsilon_c^\ell,\epsilon_d^\ell)$ is compact and
nonempty]. Define $\hat\epsilon_c \dfn \epsilon_c^\ell/2$ and
$\hat\epsilon_d \dfn \epsilon_d^\ell/2$, pick $\vv \in
\Sb(\epsilon_c^\ell,\epsilon_d^\ell) \cap \hat\Tb_\gamma^\ell$, and
note that
  \begin{align*}
    0 < \hat\epsilon_c &\le 1 - \max_{i\in\Ind}
    \max\Big\{\|\hat A_i^{\ell,\CL}(v_1^\ell) v_1^\ell \|,
    \|\hat A_i^{\ell,\CL}(\vv) \vv \|\Big\},\text{ and}\\
    0 < \hat\epsilon_d &\le \min_{i\in\Ind}
    \min\Big\{\dist(v_1^\ell,\img \hat B_i^\ell), \dist(\vv,\img \hat
    B_i^\ell)\Big\},
  \end{align*}
  by (\ref{eq:47})--(\ref{eq:43}) from
  Claim~\ref{clm:uniform}-\ref{item:12}), and since $v_1^\ell \in
  \Sb(\epsilon_c^\ell,\epsilon_d^\ell)$ and $\vv \in
  \Sb(\epsilon_c^\ell,\epsilon_d^\ell)$. Thus, we have $v_1^\ell \in
  \hat\Sb(\hat\epsilon_c,\hat\epsilon_d)$, $\vv \in
  \hat\Sb(\hat\epsilon_c,\hat\epsilon_d)$ and by
  Lemma~\ref{lem:comeig}-\ref{item:2}), we know $\hat J(\vv)=0$. Since
  $v_1^\ell$ is a minimiser within
  $\Sb(\epsilon_c^\ell,\epsilon_d^\ell)$, and
  $\vv\in\Sb(\epsilon_c^\ell,\epsilon_d^\ell)$, then $J(v_1^\ell) \le
  J(\vv)$. Thus, using Claim~\ref{clm:uniform}-\ref{item:13}), we have
  \begin{gather}
    \label{eq:50}
    J(v_1^\ell) \le J(\vv) = | \hat J(\vv) - J(\vv) | < \epsilon_2,\\
    \label{eq:56}
    \text{whenever}\quad
        \| \hat A_i^\ell - A_i^\ell \| < \delta_4,\quad
        \| \hat B_i^\ell - B_i^\ell \| < \delta_4.
  \end{gather}
  Also, using (\ref{eq:50}) and Claim~\ref{clm:uniform}-\ref{item:13}), under
  (\ref{eq:56}) we will have
  \begin{equation*}
    \hat J(v_1^\ell) \le |\hat J(v_1^\ell) - J(v_1^\ell) | 
    + J(v_1^\ell) < 2\epsilon_2.
  \end{equation*}
  Define $ \hat\V_0 \dfn \{v \in
  \hat\Sb(\hat\epsilon_c,\hat\epsilon_d) \cap
  \Sb(\epsilon_c^\ell,\epsilon_d^\ell) : \hat J(v) = 0\}$
  and note that $\vv\in\hat\V_0$. For each $\epsilon_3 > 0$, consider
  \begin{equation*}
    \B(\epsilon_3) \dfn \{ v \in
    \hat\Sb(\hat\epsilon_c,\hat\epsilon_d) \cap
    \Sb(\epsilon_c^\ell,\epsilon_d^\ell) : \dist(v,\hat\V_0) < \epsilon_3 \}.
  \end{equation*}
  By continuity of $J$ and since both
  $\hat\Sb(\hat\epsilon_c,\hat\epsilon_d)$ and
  $\Sb(\epsilon_c^\ell,\epsilon_d^\ell)$ are compact, for every
  $\epsilon_3 > 0$ we can find $\epsilon_2 > 0$ such that
  \begin{equation}
    \label{eq:55}
    \hat J(v_1^\ell) < 2\epsilon_2 \quad\Rightarrow
    \quad v_1^\ell \in \B(\epsilon_3)
  \end{equation}
  Note that the closedness of $\hat\Sb(\hat\epsilon_c,\hat\epsilon_d)$
  and $\Sb(\epsilon_c^\ell,\epsilon_d^\ell)$, and hence of
  $\hat\Sb(\hat\epsilon_c,\hat\epsilon_d) \cap
  \Sb(\epsilon_c^\ell,\epsilon_d^\ell)$, is key in allowing the
  implication (\ref{eq:55}). $v_1^\ell \in \B(\epsilon_3)$ implies the
  existence of $\hat v_1^\ell \in \hat\V_0$ such that $\| \hat
  v_1^\ell - v_1^\ell \| < \epsilon_3$, and $\hat v_1^\ell \in
  \hat\V_0$ implies that $\hat J(\hat v_1^\ell) = 0$. By
  Lemma~\ref{lem:comeig} and since $\hat v_1^\ell \in
  \hat\Sb(\hat\epsilon_c,\hat\epsilon_d)$, we have $\hat
  A_i^{\ell,\CL}(\hat v_1^\ell) \hat v_1^\ell = \alpha_i \hat
  v_1^\ell$ with $|\alpha_i| \le 1 - \hat\epsilon_c < 1$, which
  establishes that $\hat v_1^\ell$ and $\hat F_i^\ell \dfn \hat
  M_i(\hat v_1^\ell)$ are compatible with $\hat\Z^\ell$.

  Next, since ${\H}_i(v)$ has full rank whenever $v \in
  \hat\Sb(\hat\epsilon_c,\hat\epsilon_d) \cap
  \Sb(\epsilon_c^\ell,\epsilon_d^\ell)$ and (\ref{eq:56}), then $M_i(v)$ is
  continuous on $v$ and on the entries of $A_i^\ell$,
  $B_i^\ell$. Therefore, given $\epsilon > 0$ we can find $\delta_5
  > 0$ such that $\| \hat M_i(\hat v_1^\ell) - M_i(v_1^\ell) \| <
  \epsilon$ whenever
  \begin{equation*}
    \label{eq:86}
    \| \hat A_i^\ell - A_i^\ell \| < \delta_5,\quad
    \| \hat B_i^\ell - B_i^\ell \| < \delta_5,\quad
    \| \hat v_1^\ell - v_1^\ell \| < \delta_5.
  \end{equation*}
  Take $\epsilon_3 = \min\{\epsilon,\delta_5\}$ to select $\delta_4$
  as above. Finally, taking $\delta =
  \min\{\delta_2,\delta_4,\delta_5\}$ concludes the proof of
  Theorem~\ref{thm:num1}~\ref{item:5}).

\subsubsection{Proof of Theorem~\ref{thm:num1}~\ref{item:6})}
\label{sec:proof-theorem-2}

  Since (\ref{eq:100}) is continuous on the entries of $U_{\ell+1}$
  and $A_i^{\ell,\CL}$, given $\epsilon > 0$ we can find $0 < \delta_1
  < \epsilon$ such that (\ref{eq:38}) holds whenever
  \begin{align}
    \label{eq:101}
    \| \hat A_i^{\ell,\CL} - A_i^{\ell,\CL} \| &< \delta_1,\\
    \notag
    \| \hat U_{\ell+1} - U_{\ell+1} \| &< \delta_1.
  \end{align}
  Since (\ref{eq:6}) is continuous on the entries of $B_i^\ell$ and
  $U_{\ell+1}$, then given $\epsilon > 0$ we can find $0 < \delta_2 <
  \delta_1$ so that (\ref{eq:98}) holds whenever
  \begin{align}
    \notag
    \| \hat B_i^\ell - B_i^\ell \| &< \delta_2,\\
    \label{eq:96}
    \| \hat U_{\ell+1} - U_{\ell+1} \| &< \delta_2.
  \end{align}

  Consider square unitary matrices $\hat W = [\hat v_1^\ell | \hat
  U_{\ell+1}]$ and $W = [v_1^\ell | U_{\ell+1}]$. Note that, given
  $\hat v_1^\ell$, $v_1^\ell$ and $U_{\ell+1}$ so that $W$ is square
  and unitary, and $\|\hat v_1^\ell\| = 1$, for every $\delta_2 > 0$
  we can find $\hat U_{\ell+1}$ and $0 < \delta_3 < \delta_2$ so that
  \begin{align}
    \label{eq:97}
    \| \hat W - W \| &< \delta_2
  \end{align}
  whenever $\| \hat v_1^\ell - v_1^\ell \| < \delta_3.$
  Note that (\ref{eq:97}) implies (\ref{eq:96}) since $\hat
  U_{\ell+1}$ and $U_{\ell+1}$ are the last $n-\ell$ columns of $\hat
  W$ and $W$, respectively.
%
%
  Since (\ref{eq:33}) is continuous, given $\delta_1 > 0$ we can find
  $0 < \delta_4 < \delta_3$ so that (\ref{eq:101}) holds whenever
  \begin{gather}
    \notag
    \| \hat A_i^\ell - A_i^\ell \| < \delta_4,\quad
    \| \hat B_i^\ell - B_i^\ell \| < \delta_4,\quad\text{and}\; \\
   \label{eq:105}
    \| \hat F_i^\ell - F_i^\ell \| < \delta_4.
  \end{gather}

  Applying Theorem~\ref{thm:num1}~\ref{item:5}) yields that each $0 <
  \epsilon_c^\ell < \gamma$, $0 < \epsilon_d^\ell <
  \epsilon_d^{\ell,\star}$ ensure that for the given $\delta_4 > 0$,
  we can find $0 < \delta < \delta_4$ so that for each
  $A_i^\ell,B_i^\ell$ satisfying $\| \hat A_i^\ell - A_i^\ell \| <
  \delta$ and $\| \hat B_i^\ell - B_i^\ell \| < \delta$, there exist
  $\hat v_1^\ell \in \hat\Sb(\epsilon_c^\ell/2,\epsilon_d^\ell/2)$ and
  $\hat F_i^\ell$ compatible with $\hat\Z^\ell$ so that (\ref{eq:105})
  holds and $\| \hat v_1^\ell - v_1^\ell \| < \delta_4 < \epsilon$.

\subsection{Proof of Theorem~\ref{thm:KiCQLF}}
\label{sec:thm_main}

\ref{item:4}) Since $\hat\Z$ is SLASF, then it is $\gamma$-SLASF for
some $0< \gamma \le 1$. Then, Theorem~\ref{thm:prevsic} shows that
$\hat\Z^\ell$ is $\gamma$-CEAS for $\ell=1,\ldots,n$, irrespective of
which $\hat v_1^\ell$ and $\hat F_i^\ell$ $\gamma$-compatible with
$\hat\Z^\ell$ are taken at each iteration $\ell$. Since $\hat\Z$ is
controllable, then $(\hat A_i^1 = \hat A_i,\hat B_i^1 = \hat B_i)$ is
controllable and $(\hat A_i^\ell,\hat B_i^\ell)$ is controllable for
$\ell=1,\ldots,n$ (see, for example, Proposition 1.2 of
\cite{wonham_book85}). 

Since $(\hat A_i,\hat B_i)$ is controllable and
$\hat B_i \in \C^{n\times 1}$, then $\hat B_i \neq 0$ and $\hat B_i$
has full column rank. By the continuity of (\ref{eq:49}), given
$\epsilon>0$ we can find $\epsilon_1>0$ such that (\ref{eq:2}) holds
whenever
\begin{gather}
  \label{eq:25}
  \|\hat A_i - A_i \| < \epsilon_1,\quad
  \|\hat B_i - B_i \| < \epsilon_1,\\
  \label{eq:28}
  \|\hat K_i - K_i \| < \epsilon_1.  
\end{gather}
By (\ref{eq:9}), given $\epsilon_1>0$ we can find
$0<\epsilon_2<\epsilon_1$ such that (\ref{eq:28}) holds whenever
\begin{equation}
  \label{eq:27}
  \| \hat F_i^\ell - F_i^\ell \| < \epsilon_2,\quad
  \| \hat U_\ell - U_\ell \| < \epsilon_2,\quad
  \text{for }\ell=1,\ldots,n.
\end{equation}

For $\ell=n$, $\hat A_i^n$ and $\hat B_i^n$ are scalars and $\hat
B_i^n \neq 0$ because $(\hat A_i^n,\hat B_i^n)$ is
controllable. Consider the following condition
\begin{equation}
  \label{eq:25a}
  \|\hat A_i^n - A_i^n \| < \epsilon_3,\quad
  \|\hat B_i^n - B_i^n \| < \epsilon_3.
\end{equation}
Note that if we select $0 < \epsilon_3 < \epsilon_2$ small enough then
(\ref{eq:25a}) will imply that $B_i^n \neq 0$. Taking such small
$\epsilon_3$, for each set $\{(A_i^n,B_i^n):i\in\Ind\}$ satisfying
(\ref{eq:25a}) Procedure~$\cea$ in Algorithm~\ref{alg:main} does not
employ $\epsilon_d^n$, and returns $F_i^n$ such that $|A_i^{n,\CL}|
\le 1-\epsilon_c^n$, which is possible for every $0<\epsilon_c^n\le
1$. Note that if we choose $\epsilon_3>0$ small enough, $F_i^n$ will,
in addition, satisfy $|\hat A_i^n + \hat B_i^n F_i^n|\le
1-\epsilon_c^n/2$. Therefore, we may take $\hat F_i^n = F_i^n$.

If $n=1$, \ref{item:4}) is established taking $\epsilon_c^\star = 1$,
arbitrary $\epsilon_d^\star > 0$, and $\delta=\epsilon_3$,
since in this case $K_i = F_i^1$ and by the above argument we may take
$\hat K_i = K_i$.

We proceed for $n>1$. We next establish the existence of
$\epsilon_c^\star$ and $\epsilon_d^\star$. For $\ell=1,\ldots, n-1$,
i.e. for $n_r > 1$, consider the sets $\hat\Tb_\gamma^\ell$ as
defined in (\ref{eq:setT}). 
Note that the set $\hat\Tb_\gamma^\ell$ depends on the matrices $\hat
A_i^\ell$ and $\hat B_i^\ell$, for $i\in\Ind$. Since the latter
matrices depend on the specific vectors and matrices computed at
previous iterations of Algorithm~\ref{alg:main}, then
$\hat\Tb_\gamma^\ell$ also depends on such quantities. 
For
$\ell=1,\ldots,n-1$ and $k=1,\ldots,\ell$, consider the expression
\begin{equation}
  \label{eq:53}
  \eta_\gamma^{k,\ell} \dfn \inf_{\hat v_1^k\in\hat\Tb_\gamma^k}
  \inf_{\hat v_1^{k+1}\in\hat\Tb_\gamma^{k+1}} \cdots \left[\inf_{\hat v_1^\ell\in\hat\Tb_\gamma^{\ell}}
  \min_{i\in\Ind} \dist(\hat v_1^\ell,\img \hat B_i^\ell)\right].
\end{equation}
Note that the expression between square brackets in (\ref{eq:53})
coincides with $\epsilon_d^{\ell,\star}$ as given in
(\ref{eq:48}). The expressions (\ref{eq:53}) can be interpreted as the
infimum of the minimum distance between all the possible
$\gamma$-compatible vectors that could be obtained at iteration $\ell$
and all the possible $\img \hat B_i^\ell$ that can be obtained at
iteration $\ell$, having the data corresponding to iteration
$k\le\ell$ and over all possible outcomes at iterations
$k,k+1,\ldots,\ell$. We make the following claim, to be proved later.
\begin{claim}
  \label{clm:quilombo}
  For $\ell=1,\ldots,n-1$ and $k=1,\ldots,\ell$, expression (\ref{eq:53})
  \begin{enumerate}[a)]
  \item may depend on $\hat A_i^k$, $\hat B_i^k$ and $\gamma$ but does not
    depend on $\hat F_i^r$ or $\hat U_{r+1}$ for $k\le r \le \ell-1$.\label{item:11}
  \item is positive.\label{item:10}
  \end{enumerate}
\end{claim}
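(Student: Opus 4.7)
The plan is to treat the two parts separately: part~(a) gives algebraic invariance properties that then make part~(b) accessible to a clean compactness argument.

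For part~(a), at each iteration $r$ with $k \le r \le \ell-1$ I would establish two independent invariances. First, changes to $\hat F_i^r$ affect neither the subsequent data nor the subsequent sets: $\hat B_i^{r+1} = \hat U_{r+1}^* \hat B_i^r$ has no dependence on $\hat F_i^r$, and using the decomposition $\hat A_i^{r+1} = \hat U_{r+1}^* \hat A_i^r \hat U_{r+1} + \hat B_i^{r+1}(\hat F_i^r \hat U_{r+1})$, any change in $\hat F_i^r$ adds a term of the form $\hat B_i^{r+1}\widetilde F$ to $\hat A_i^{r+1}$; in the defining condition $(\hat A_i^{r+1}+\hat B_i^{r+1}F)v = \lambda v$ for $\hat\Tb_\gamma^{r+1}$ this shift is absorbed by reparameterizing the existentially quantified $F$. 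The same effect propagates inductively to later sets $\hat\Tb_\gamma^{r+s}$, since such a shift on $\hat A_i^{r+1}$ produces a shift on $\hat A_i^{r+s}$ again of the form $\hat B_i^{r+s}\widetilde F^{(s)}$. Second, two valid unitary completions of $[\hat v_1^r|\cdot]$ differ by a right-multiplied unitary $W$, which induces $\hat A_i^{r+1\prime} = W^*\hat A_i^{r+1}W$ and $\hat B_i^{r+1\prime} = W^*\hat B_i^{r+1}$; both $\hat\Tb_\gamma^{r+1}$ and $\img\hat B_i^{r+1}$ transform by the isometry $W^*$, which preserves Euclidean distances and hence the nested infimum. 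Iterating both invariances over $r=k,\ldots,\ell-1$ yields independence of $\eta_\gamma^{k,\ell}$ from all such $\hat F_i^r$ and $\hat U_{r+1}$.

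For part~(b), by part~(a) I can fix a specific continuous choice, e.g.\ Gram--Schmidt completion for each $\hat U_{r+1}$ given $\hat v_1^r$ and $\hat F_i^r=0$, so that $\hat A_i^{r+1}$ and $\hat B_i^{r+1}$ become continuous functions of $(\hat v_1^k,\ldots,\hat v_1^r)$. Then $\eta_\gamma^{k,\ell}$ is the infimum of the continuous map $(\hat v_1^k, \ldots, \hat v_1^\ell) \mapsto \min_{i\in\Ind} \dist(\hat v_1^\ell, \img \hat B_i^\ell)$ over
\[
\mathcal{P} = \{(\hat v_1^k, \ldots, \hat v_1^\ell) : \hat v_1^r \in \hat\Tb_\gamma^r \text{ for the step data}, \; r = k, \ldots, \ell\}.
\]
I would show $\mathcal{P}$ is closed as a subset of a product of unit spheres (hence compact) by adapting Claim~\ref{clm:Tclosed}: given a convergent sequence of valid tuples, the step data $\hat A_i^r, \hat B_i^r$ converges by continuity, and the projected formulation (\ref{eq:Talt}) with $\hat P_i^r$ carries over to the limit via a diagonal subsequence argument on the associated $\lambda$'s. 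At any point of $\mathcal{P}$, $(\hat A_i^\ell, \hat B_i^\ell)$ is controllable (controllability is preserved across iterations, as invoked in the proof of Theorem~\ref{thm:KiCQLF}~\ref{item:4})), so since $n_r>1$ Lemma~\ref{lem:SIgood} gives $\hat v_1^\ell \notin \img\hat B_i^\ell$, and hence $\dist(\hat v_1^\ell,\img\hat B_i^\ell)>0$. Compactness then forces the infimum to be attained at a strictly positive value.

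I expect the main obstacle to lie in the closedness of $\mathcal{P}$. The sets $\hat\Tb_\gamma^r$ are defined only implicitly, via existentially quantified $F$ and $\lambda$, and their dependence on the varying data must be tracked carefully through the limit. The projected reformulation (\ref{eq:Talt}) combined with a boundedness-plus-subsequence argument on the $\lambda$'s (analogous to Claim~\ref{clm:Tclosed}) should reduce the issue to continuity of explicit maps, but the need to simultaneously control all intermediate $\hat v_1^r$ together with the reconstructed matrices is the principal source of technical care.
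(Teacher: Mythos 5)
Your part~(a) is essentially the paper's argument. The paper removes the dependence on $\hat F_i^r$ by left-multiplying the eigenrelation with the projector $\hat P_i^{r+1}$ of (\ref{eq:P}), which annihilates $\hat B_i^{r+1}$ and hence the feedback term; this is the same mechanism as your absorption of the shift $\hat B_i^{r+1}\widetilde F$ into the existentially quantified $F$, and your propagation of the shift to later levels and your unitary-ambiguity argument for $\hat U_{r+1}$ match the paper's. For part~(b) you take a genuinely different route. The paper does not flatten the nested infimum in (\ref{eq:53}) into one optimisation over tuples; it peels off one level at a time, writing $\eta_\gamma^{\ell,\ell}=-g(\hat v_1^{\ell-1})$ as a marginal function of the set-valued map $\Rb(\hat U_\ell)=\hat\Tb_\gamma^{\ell}(\hat A_i^{\ell-1},\hat B_i^{\ell-1},\hat U_\ell)$, proves $\Rb$ upper semicontinuous with compact values (Claim~\ref{clm:usc}, a closed-graph argument on the projected eigenrelation --- exactly the $\lambda$-subsequence argument you anticipate), invokes the Aubin--Frankowska marginal-function theorem to get lower semicontinuity of $-g$, and minimises over the compact $\hat\Tb_\gamma^{\ell-1}$, iterating down to level $k$. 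Your single compact set $\mathcal{P}$ would, if correctly constructed, give the same conclusion more directly, and its closedness reduces to the same closed-graph fact; what the paper's route buys is that it never needs to realise the intermediate data as a \emph{function} of the vectors.

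That is where your one genuine defect lies: the opening move ``fix a specific continuous choice, e.g.\ Gram--Schmidt completion for each $\hat U_{r+1}$ given $\hat v_1^r$.'' A continuous assignment $v\mapsto[v\,|\,U(v)]$ of unitary completions on the unit sphere of $\C^{n_r}$ is a global section of the bundle $U(n_r)\to\Sb_1$, which does not exist in general for $n_r\ge 3$; and Gram--Schmidt against a fixed reference basis is discontinuous wherever the retained reference vectors must change. Without a continuous selection, $\hat A_i^{r+1}$ and $\hat B_i^{r+1}$ are not continuous functions of $(\hat v_1^k,\ldots,\hat v_1^r)$, and your compactness argument does not close. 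The repair is to stop trying to section the bundle: let $\mathcal{P}$ consist of tuples of unitary matrices $\hat W_r=[\hat v_1^r\,|\,\hat U_{r+1}]$ together with the final vector $\hat v_1^\ell$, on which all the recursions (\ref{eq:100})--(\ref{eq:6}) are manifestly continuous and the feasibility constraints define a closed (hence compact) set by your diagonal argument; by your part~(a) the infimum over this larger set still equals $\eta_\gamma^{k,\ell}$. This is in effect what the paper's set $\setU$ in (\ref{eq:94}) and the map $\Rb:\setU\leadsto\Sb_1$ accomplish. With that repair, and with the closedness of $\mathcal{P}$ carried out along the lines of Claims~\ref{clm:Tclosed} and~\ref{clm:usc} as you propose, your argument is sound.
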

By Claim~\ref{clm:quilombo}, knowing $\hat A_i^1 = \hat A_i$, $\hat
B_i^1 = \hat B_i$ and $\gamma$, we can define positive
\begin{align}
  \label{eq:64}
  \epsilon_c^\star &\dfn \gamma,\\
  \label{eq:63}
  \epsilon_d^\star &\dfn \min_{\ell=1,\ldots,n-1} \eta_\gamma^{1,\ell}.
\end{align}
Note that $\epsilon_d^{\star} \le \epsilon_d^{\ell,\star}$ for
$\ell=1,\ldots,n-1$, with $\epsilon_d^{\ell,\star}$ as in (\ref{eq:48}).


Applying Theorem~\ref{thm:num1} repeatedly
($n-1$ times), it follows that each $0 < \epsilon_c^\ell <
\epsilon_c^\star$ and $0 < \epsilon_d^\ell < \epsilon_d^\star$ ensure
that given $\delta_n \dfn \epsilon_3>0$, we can find a corresponding
$0 < \delta_\ell < \epsilon_3$, for $\ell=n-1,\ldots,1$ so that for
each $A_i^\ell,B_i^\ell$ satisfying
\begin{equation}
  \label{eq:31}
  \|\hat A_i^\ell - A_i^\ell \| < \delta_\ell,\quad
  \|\hat B_i^\ell - B_i^\ell \| < \delta_\ell,
\end{equation}
there exist\footnote{We add a superscript $\ell$ to the set $\Sb$ to
  denote that this set is different at each iteration of
  Algorithm~\ref{alg:main}.} $\hat v_1^\ell \in
\hat\Sb^\ell(\epsilon_c^\ell/2, \epsilon_d^\ell/2)$ and $\hat
F_i^\ell$ compatible with $\hat\Z^\ell$, and a unitary matrix
(\ref{eq:74}) that cause, for $\ell=1,\ldots,n-1$,
\begin{gather}
  \label{eq:32}
  \| \hat F_i^\ell - F_i^\ell \| < \delta_{\ell+1},\quad
  \| \hat v_1^\ell - v_1^\ell \| < \delta_{\ell+1},\\
  \label{eq:34}
  \| \hat U_{\ell+1} - U_{\ell+1} \| < \delta_{\ell+1},\quad
  \| \hat A_i^{\ell+1} - A_i^{\ell+1} \| < \delta_{\ell+1},\quad
  \| \hat B_i^{\ell+1} - B_i^{\ell+1} \| < \delta_{\ell+1}.
\end{gather}
By definition, $\hat U_1 = U_1 = \id$. The
latter fact jointly with (\ref{eq:32})--(\ref{eq:34}) establish
(\ref{eq:27}). Take $\delta = \delta_1$. Since $\hat v_1^\ell \in
\hat\Sb^\ell(\epsilon_c^\ell/2, \epsilon_d^\ell/2)$ and $\hat v_1^\ell$,
$\hat F_i^\ell$ are compatible with $\hat\Z^\ell$ then
Lemma~\ref{lem:comeig} shows that $(\hat A_i^\ell + \hat B_i^\ell \hat
F_i^\ell) \hat v_1^\ell = \lambda_i^\ell \hat v_1^\ell$ with
$|\lambda_i^\ell| \le 1-\epsilon_c^\ell/2$, for
$\ell=1,\ldots,n-1$. Also, $\lambda_i^\ell$ is an eigenvalue of $\hat
A_i^\CL$ (recall Remark~\ref{rem:ut}), and so is $\hat A_i^n + \hat
B_i^n \hat F_i^n$, which satisfies $|\hat A_i^n + \hat B_i^n \hat
F_i^n|\le 1-\epsilon_c^n/2$. Therefore, $\srad(\hat A_i^\CL)\le
1-\epsilon_c/2$ and the proof of part~\ref{item:4}) is concluded if
Claim~\ref{clm:quilombo} is true.

To establish Claim~\ref{clm:quilombo}, consider the sets
$\hat\Tb_\gamma^\ell$ defined in (\ref{eq:72}), for
$\ell=1,\ldots,n-1$. Note that $\hat\Tb_\gamma^{\ell}$ depends on
$\hat A_i^\ell, \hat B_i^\ell$ which, by (\ref{eq:33}) and
(\ref{eq:62})--(\ref{eq:6}), may depend on $\hat A_i^{\ell-1}, \hat
B_i^{\ell-1}, \hat F_i^{\ell-1}$ and $\hat U_{\ell}$. We next
establish that $\hat\Tb_\gamma^{\ell}$ does not depend on $\hat
F_i^{\ell-1}$. From (\ref{eq:33}), (\ref{eq:100}) and (\ref{eq:6}), we
have $\hat A_i^\ell = \hat U_\ell^* \hat A_i^{\ell-1} \hat U_\ell +
\hat B_i^{\ell} \hat F_i^{\ell-1} \hat U_\ell$. Consider $\hat
P_i^\ell$ as in (\ref{eq:P}), and note that $\hat P_i^\ell \hat
B_i^\ell = 0$. Thus, it follows that $\hat P_i^\ell (\lambda \id -
\hat A_i^\ell) = \hat P_i^\ell (\lambda \id - \hat U_\ell^* \hat
A_i^{\ell-1} \hat U_\ell)$, which does not depend on $\hat
F_i^{\ell-1}$. From the equivalent definition of
$\hat\Tb_\gamma^{\ell}$ in (\ref{eq:Talt})--(\ref{eq:P}), it follows
that $\hat\Tb_\gamma^{\ell}$ does not depend on $\hat
F_i^{\ell-1}$. Following similar considerations, we can show that
$\hat\Tb_\gamma^{\ell}$ does not depend on $\hat F_i^k$ for
$k=1,\ldots,\ell-1$.

Next, consider $\eta_\gamma^{\ell,\ell}$ from (\ref{eq:53}), rewritten
to explicitly show the dependence on matrices:
\begin{equation}
  \label{eq:75}
  \eta_\gamma^{\ell,\ell} =\inf_{\hat v^\ell\in\hat\Tb_\gamma^{\ell}(\hat A_i^{\ell-1},\hat
      B_i^{\ell-1},\hat U_\ell)}
  \min_{i\in\Ind} \dist(\hat v^\ell,\img \hat U_\ell^* \hat B_i^{\ell-1}),
\end{equation}
We next show that for fixed $\hat A_i^{\ell-1}$ and $\hat
B_i^{\ell-1}$, (\ref{eq:75}) depends only on the orthogonal complement
of $\hat U_\ell$ and not on $\hat U_\ell$ itself. Consider fixed $\hat
A_i^{\ell-1},\hat B_i^{\ell-1},\hat U_\ell$, and let $\hat v_1^{\ell-1}$
be such that $[\hat v_1^{\ell-1}|\hat U_\ell]$ is unitary. Let $\tilde
U_\ell$ also make $[\hat v_1^{\ell-1}|\tilde U_\ell]$ unitary. Write
$\hat B_i^{\ell-1}$ as
\begin{equation}
  \label{eq:76}
  \hat B_i^{\ell-1} = \hat v_1^{\ell-1} \alpha + \hat U_\ell \hat\beta_i =
  \hat v_1^{\ell-1} \alpha + \tilde U_\ell \tilde\beta_i
\end{equation}
Using (\ref{eq:76}) we may write
\begin{equation}
  \label{eq:77}
  \dist(\hat v^\ell,\img \hat U_\ell^* \hat B_i^{\ell-1}) =
  \dist(\hat v^\ell,\img \hat\beta_i) = \dist(\hat v^\ell,\img \hat U_\ell^* \tilde
  U_\ell \tilde\beta_i)
\end{equation}
Put $\hat v^\ell = \hat U_\ell^* \tilde U_\ell \tilde v^\ell$ and, since
$\hat U_\ell^* \tilde U_\ell$ is unitary, we have
\begin{equation}
  \label{eq:79}
  \dist(\hat v^\ell,\img \hat\beta_i) = \dist(\tilde v^\ell,\img \tilde\beta_i),\quad
  \text{and}\quad
  \min_{i\in\Ind}\dist(\hat v^\ell,\img \hat\beta_i) = \min_{i\in\Ind}\dist(\tilde v^\ell,\img \tilde\beta_i).
\end{equation}
Let $\tilde\Tb_\gamma^\ell = \hat\Tb_\gamma^\ell(\hat A_i^{\ell-1},\hat
B_i^{\ell-1},\tilde U_\ell)$ and note that $\hat v^\ell
\in \hat\Tb_\gamma^\ell$ if and only if $\tilde v^\ell \in
\tilde\Tb_\gamma^\ell$. Therefore,
\begin{equation}
  \inf_{\hat v^\ell\in\hat\Tb_\gamma^{\ell}} \min_{i\in\Ind}\dist(\hat v^\ell,\img \hat\beta_i) = 
  \inf_{\tilde v^\ell\in\tilde\Tb_\gamma^{\ell}} \min_{i\in\Ind}\dist(\tilde v^\ell,\img \tilde\beta_i)
\end{equation}
Recalling that $\hat\beta_i = \hat U_\ell^* \hat B_i^{\ell-1}$ and
$\tilde\beta_i = \tilde U_\ell^* \hat B_i^{\ell-1}$, we establish that
(\ref{eq:75}) depends on $\hat U_{\ell}$ only through its
orthogonal complement $\hat v_1^{\ell-1}$. Following similar
considerations, we can establish that $\eta_\gamma^{k,\ell}$ depends
on $\hat U_r$ only through its orthogonal complement $\hat v^{r-1}$,
for $r=k,\ldots,\ell$.

To prove part~\ref{item:10}) of Claim~\ref{clm:quilombo}, consider
(\ref{eq:75}) again. Since $\eta_\gamma^{\ell,\ell} =
\epsilon_d^{\ell,\star}$ as in (\ref{eq:48}), we already know that
$\eta_\gamma^{\ell,\ell}>0$ for every possible $\hat
v_1^{\ell-1}$. Next, for fixed $\hat A_i^{\ell-1}$, $\hat
B_i^{\ell-1}$, we may write
\begin{align}
  \label{eq:80}
  \eta_\gamma^{\ell,\ell} &= -g(\hat v_1^{\ell-1}) = \inf_{\hat
    v_1^\ell\in\Rb(\hat U_\ell)} -f(\hat U_\ell, \hat v_1^\ell) =
  -\sup_{\hat v_1^\ell\in\Rb(\hat U_\ell)} f(\hat U_\ell, \hat v_1^\ell),\\
  \label{eq:84}
  \eta_\gamma^{\ell-1,\ell} &= \inf_{\hat v_1^{\ell-1} \in
    \hat\Tb_\gamma^{\ell-1}} -g(\hat v_1^{\ell-1}),
\end{align}
where, in (\ref{eq:80}), $\hat U_\ell$ must be taken so that $[\hat
v_1^{\ell-1}|\hat U_\ell]$ is unitary, and we have defined
\begin{align}
  \label{eq:81}
  f(\hat U_\ell, \hat v_1^\ell) &= -\min_{i\in\Ind} \dist(\hat v_1^\ell,\img \hat U_\ell^* \hat B_i^{\ell-1}),\\
  \label{eq:83}
  \Rb(\hat U_\ell) &= \hat\Tb_\gamma^{\ell}(\hat A_i^{\ell-1},\hat B_i^{\ell-1},\hat U_\ell).
\end{align}
Note that $f$ is continuous at every $(\hat U_\ell,\hat v^\ell)$ for
which $\hat U_\ell^* \hat B_i^{\ell-1} \neq 0$ for $i\in\Ind$. 

Let $\setU$ denote the following set
\begin{equation}
  \label{eq:94}
  \setU := \{ \hat U_\ell \in \C^{n-\ell,n-\ell-1} :  
  [\hat v_1^{\ell-1}|\hat U_\ell]\text{ is unitary, } \hat v_1^{\ell-1} \in \hat\Tb_\gamma^{\ell-1} \}
\end{equation}
and regard $\Rb$ as the set-valued map $\Rb : \setU \leadsto \Sb_1$
that maps each $\hat U_\ell\in\setU$ to the set $\Rb(\hat
U_\ell)\subset\Sb_1$. Making these considerations, the function
$g:\hat\Tb_\gamma^{\ell-1} \to \R$ defined above can be regarded as a
marginal function. Application of Theorem 1.4.16 of
\cite{aubfra_book90} to $g$ yields that $g$ is upper semicontinuous if
$f$ and $\Rb$ are upper semicontinuous, and $\Rb$ has compact
values. Since $f$ is continuous, it is upper semicontinuous. $\Rb$
takes compact values because $\hat\Tb_\gamma^\ell$ is compact.
\begin{claim}
  \label{clm:usc}
  The set-valued map $\Rb : \setU \leadsto \Sb_1$ is upper semicontinuous.
\end{claim}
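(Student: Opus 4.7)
The plan is to prove upper semicontinuity of $\Rb$ by showing that its graph is closed in $\setU \times \Sb_1$. Because $\Sb_1$ is compact and $\Rb$ has closed values (by essentially the argument of Claim~\ref{clm:Tclosed} applied with the parametrising matrices $\hat U_\ell^* \hat A_i^{\ell-1} \hat U_\ell$ and $\hat U_\ell^* \hat B_i^{\ell-1}$), the standard characterisation of upper semicontinuity for set-valued maps into a compact metric space reduces the claim to the sequential condition: whenever $\hat U_\ell^{(k)} \to \hat U_\ell$ in $\setU$ and $v^{(k)} \in \Rb(\hat U_\ell^{(k)})$ with $v^{(k)} \to v$, then $v \in \Rb(\hat U_\ell)$.

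I would execute this as follows. Using the description (\ref{eq:Talt})--(\ref{eq:P}) together with the reduction $\hat P_i^\ell(\lambda\id - \hat A_i^\ell) = \hat P_i^\ell(\lambda\id - \hat U_\ell^* \hat A_i^{\ell-1}\hat U_\ell)$ already established in this proof, the membership $v^{(k)} \in \Rb(\hat U_\ell^{(k)})$ yields, for each $i \in \Ind$, a scalar $\lambda_i^{(k)}\in\C$ with $|\lambda_i^{(k)}|\le 1-\gamma$ such that
\begin{equation*}
\hat P_i^{\ell,(k)} \bigl(\lambda_i^{(k)} \id - (\hat U_\ell^{(k)})^* \hat A_i^{\ell-1} \hat U_\ell^{(k)}\bigr) v^{(k)} = 0,
\end{equation*}
where $\hat P_i^{\ell,(k)}$ is the orthogonal projector onto the complement of $\img (\hat U_\ell^{(k)})^* \hat B_i^{\ell-1}$. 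Since the $\lambda_i^{(k)}$ are bounded, a finite diagonal extraction gives $\lambda_i^{(k)} \to \lambda_i$ with $|\lambda_i|\le 1-\gamma$ for each $i$. Provided the projectors $\hat P_i^{\ell,(k)}$ converge to a limiting projector $\hat P_i^\ell$, continuity of the remaining quantities lets me pass to the limit, obtaining $\hat P_i^\ell(\lambda_i \id - \hat U_\ell^* \hat A_i^{\ell-1}\hat U_\ell) v = 0$ for each $i$; together with $\|v\|=1$ (continuity of the norm) this places $v \in \Rb(\hat U_\ell)$.

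The main obstacle is therefore well-definedness and continuity of the projector, which amount to the nonvanishing condition $\hat U_\ell^* \hat B_i^{\ell-1} \neq 0$ at the limit. I would handle this by exploiting the definition of $\setU$: each $\hat U_\ell^{(k)}$ is the orthogonal complement of some $\hat v_{1,k}^{\ell-1} \in \hat\Tb_\gamma^{\ell-1}$, and compactness of $\hat\Tb_\gamma^{\ell-1}$ (Claim~\ref{clm:Tclosed}) allows a further subsequence along which $\hat v_{1,k}^{\ell-1}\to \hat v_1^{\ell-1}\in\hat\Tb_\gamma^{\ell-1}$; by continuity of the inner product, $\hat v_1^{\ell-1}$ is orthogonal to the columns of $\hat U_\ell$ and hence spans the orthogonal complement of $\img\hat U_\ell$. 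Controllability of $(\hat A_i^{\ell-1},\hat B_i^{\ell-1})$ and Lemma~\ref{lem:SIgood} then force $\hat v_1^{\ell-1}\notin\img \hat B_i^{\ell-1}$, which, because both are one-dimensional subspaces, is equivalent to $\hat U_\ell^* \hat B_i^{\ell-1} \neq 0$ for every $i$. This simultaneously verifies $\hat U_\ell\in\setU$ and yields the continuity required to complete the limit argument.
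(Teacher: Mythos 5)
Your proposal is correct and follows essentially the same route as the paper: reduce upper semicontinuity to closedness of the graph (using compact values), write membership in $\Rb(u_k)$ via the projector identity $\hat P_{i,k}^\ell(\lambda_{i,k}\id - u_k^*\hat A_i^{\ell-1}u_k)v_k=0$, and pass to the limit using boundedness of the $\lambda_{i,k}$. The only differences are refinements in your favour: you extract a convergent subsequence of the eigenvalues where the paper instead argues by cases on whether $\hat P_i^\ell v=0$, and you explicitly justify that $\hat U_\ell^*\hat B_i^{\ell-1}\neq 0$ at the limit (via compactness of $\hat\Tb_\gamma^{\ell-1}$ and Lemma~\ref{lem:SIgood}), a point the paper's proof leaves implicit when it writes $\lim_{k\to\infty}\hat P_{i,k}^\ell = \hat P_i^\ell$.
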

\begin{proof}
  Since $\Rb$ has compact values, i.e. the sets $\Rb(\hat U_\ell)$ are
  compact for every $\hat U_\ell \in \setU$, we need only prove that
  the graph of $\Rb$ is closed. For, take a sequence $\{u_k \in
  \setU\}_{k=0}^\infty$ so that $\lim_{k\to\infty} u_k = \hat U_\ell$
  and suppose that $v_k \in \Rb(u_k)$ so that $\lim_{k\to\infty} v_k =
  v \in \Sb_1$. We have to establish that $v\in\Rb(\hat
  U_\ell)$. Recalling (\ref{eq:Talt})--(\ref{eq:P}), we have that $v_k
  \in \Rb(u_k)$ implies
  \begin{equation}
    \label{eq:87}
    \hat P_{i,k}^\ell (\lambda_{i,k} \id - u_k^* \hat A_i^{\ell-1} u_k) v_k = 0,
  \end{equation}
  with $|\lambda_{i,k}|\le 1-\gamma$ and
  \begin{equation}
    \label{eq:88}
    \hat P_{i,k}^\ell = \left[\id - u_k^* \hat B_i^{\ell-1}
      \left((u_k^* \hat B_i^{\ell-1})^* u_k^* \hat
        B_i^{\ell-1}\right)^{-1}(u_k^* \hat B_i^{\ell-1})^*\right].
  \end{equation}
  for all $i\in\Ind$ and $k\ge 0$. We may rewrite (\ref{eq:87}) as
  \begin{equation}
    \label{eq:89}
    \hat P_{i,k}^\ell v_k \lambda_{i,k} = \hat P_{i,k}^\ell u_k^* \hat A_i^{\ell-1} u_k v_k.
  \end{equation}
  Since $u_k$ and $v_k$ are convergent, the limit of the right-hand
  side of (\ref{eq:89}) exists and we have
  \begin{equation}
    \label{eq:90}
    \lim_{k\to\infty} \hat P_{i,k}^\ell v_k \lambda_{i,k} = 
    \hat P_i^\ell \hat U_\ell^* \hat A_i^{\ell-1} \hat U_\ell v = 
    \lim_{k\to\infty} \hat P_{i}^\ell v \lambda_{i,k}, 
  \end{equation}
  where the last equality above follows because $\lambda_{i,k}$ is
  bounded. If $\hat P_i^\ell v \neq 0$, then $\lim_{k\to\infty}
  \lambda_{i,k} = \lambda_i$, with $|\lambda_i|\le 1-\gamma$. If $\hat
  P_i^\ell v = 0$, then (\ref{eq:90}) equals zero. In either case, we
  have $v\in\Rb(\hat U_\ell) = \hat\Tb_\gamma^\ell(\hat
  A_i^{\ell-1},\hat B_i^{\ell-1},\hat U_\ell)$.
\end{proof}
The continuity of $f$ and Claim~\ref{clm:usc} imply that $g$ is upper
semicontinuous. Then, $-g$ is lower semicontinuous and since
$\hat\Tb_\gamma^{\ell-1}$ is compact, then $-g$ will attain a minimum
within $\hat\Tb_\gamma^{\ell-1}$. Therefore,
$\eta_\gamma^{\ell-1,\ell} > 0$. Following similar considerations, we
can establish that $\eta_\gamma^{k,\ell} > 0$ for $k=1,\ldots,\ell$.

\ref{item:7}) By Definition~\ref{def:SLASF}, a SLASF $\hat\Z$ with
compatible $\hat K_i$ means that $\hat A_i^\CL$ will generate a
solvable Lie algebra and satisfy $\srad(\hat A_i^\CL)<1$. In addition,
part~\ref{item:4}) ensures that $\srad(\hat A_i^\CL) \le
1-\epsilon_c/2$ irrespective of which $\epsilon$ or $\delta$ are
selected. By Lemma~\ref{lem:stab}, the closed-loop DTSS with
subsystem matrices $\hat A_i^\CL$ admits a CQLF. Part~\ref{item:7})
then follows from robustness of the CQLF, selecting $\epsilon > 0$
small enough.

\bibliographystyle{IEEEtran}

\end{document}